\documentclass[11pt]{article}
\usepackage[a4paper,margin=26mm]{geometry}
\usepackage[T1]{fontenc}
\usepackage{lmodern,amsmath,amssymb,amsthm,mathtools,booktabs,microtype}
\usepackage{xcolor,array,tabularx,flafter,tikz}
\usepackage[colorlinks=true,linkcolor=blue!55!black,citecolor=blue!55!black,urlcolor=blue!55!black]{hyperref}
\usepackage{aliascnt}
\usepackage[nameinlink,noabbrev,capitalise]{cleveref}
\newtheorem{theorem}{Theorem}[section]
\newaliascnt{lemma}{theorem}
\newtheorem{lemma}[lemma]{Lemma}
\aliascntresetthe{lemma}
\newaliascnt{proposition}{theorem}

\aliascntresetthe{proposition}
\newaliascnt{corollary}{theorem}

\aliascntresetthe{corollary}
\theoremstyle{definition}
\newaliascnt{definition}{theorem}
\newtheorem{definition}[definition]{Definition}
\aliascntresetthe{definition}
\theoremstyle{remark}
\newaliascnt{remark}{theorem}

\aliascntresetthe{remark}
\DeclareMathOperator{\TV}{TV}

\newcommand{\cc}{\boxtimes}
\newcommand{\E}{\mathbb E}
\newcommand{\perfect}{\boldsymbol{\Delta}_\infty}
\newcommand{\corr}{\mathcal A}
\newcommand{\Ft}{\mathcal F_{\mathrm t}}
\newcommand{\Fnt}{\mathcal F_{\mathrm{nt}}}

\newcommand{\bms}{\mathrm{BMS}}
\newcommand{\dd}{\mathrm d}
\title{Spatially Coupled MacKay--Neal Codes\\on Symmetric Markov Noise:\\ A Reduction to BMS Channels}
\author{Kenta Kasai\\\small Institute of Science Tokyo}
\date{September 21, 2026}
\hypersetup{pdfauthor={Kenta Kasai},pdftitle={Spatially Coupled MacKay--Neal Codes on Symmetric Markov Noise:  A Reduction to BMS Channels}}
\begin{document}
\maketitle
\begin{abstract}
We prove a capacity criterion for fixed-degree spatially coupled MacKay--Neal codes on additive symmetric Markov noise. The noise bit flips with probability $p$, so the channel has one parameter and capacity $1-h_2(p)$. For every integer $\ell\geq4$ with $3/\ell<1-h_2(p)$, matched finite-window BCJR and sum-product decoding admit code sequences, possibly depending on $p$, with actual rate tending to $3/\ell$ and vanishing average information-bit error. The proof parallels the GEC-to-BEC argument: a conditional-entropy correction transfers BMS fixed-point positivity to the channel with memory. The correction also covers a capacity deficit of the effective BMS channel. A two-state posterior contraction supplies the additional derivative bound needed for threshold saturation. The result uses the cited BMS positivity theorem, with an analytic proof for $\ell\geq33$ and exact interval certificates for $4\leq\ell\leq32$.
\end{abstract}
\section{Introduction}

The generalized erasure channel (GEC) proof for spatially coupled MacKay--Neal (MN) codes reduces a channel with memory to a binary erasure channel (BEC) potential comparison~\cite{OkazakiKasai2014,FukushimaOkazakiKasai2015}. We develop its nonerasure counterpart for a single, elementary channel: additive symmetric Markov noise. The comparison channel is now binary-input memoryless symmetric (BMS). The purpose is to isolate the reduction and the extra dependence estimate needed by the detector.

\subsection{Historical background}

Finite-state noise models were developed to describe bursts of errors that memoryless models cannot represent. The Gilbert and Elliott models~\cite{Gilbert1960,Elliott1963} and the subsequent capacity and coding analysis of Gilbert--Elliott channels~\cite{MushkinBarDavid1989} established the importance of exploiting temporal dependence. The BCJR algorithm provides soft-output inference on a trellis~\cite{BCJR1974}; joint iterative detection and low-density parity-check (LDPC) decoding were later analyzed for channels with memory and erasure noise~\cite{PfisterSiegel2008}. Trellis-based rate estimation made finite-state information rates numerically accessible~\cite{ArnoldLoeliger2001,PfisterSoriagaSiegel2001}. Joint detector--decoder analysis, often called turbo equalization, established density evolution and concentration for intersymbol-interference (ISI) channels~\cite{KavcicMaMitzenmacher2003} and density evolution for Gilbert--Elliott channels~\cite{EckfordKschischangPasupathy2005}.

Spatial coupling developed from convolutional LDPC constructions~\cite{FelstromZigangirov1999}. Threshold saturation was proved for the BEC~\cite{KudekarRichardsonUrbanke2011} and then for BMS channels~\cite{KudekarRichardsonUrbanke2013}, explaining how termination improves iterative decoding. Potential methods gave a density-valued formulation of this mechanism~\cite{Kumar2014}. For channels with memory, numerical evidence of saturation~\cite{KudekarKasai2011} and studies of intersymbol-interference channels~\cite{NguyenEtAl2012} motivated the corresponding detector--decoder analysis.

MN codes originated in sparse-matrix constructions~\cite{MacKayNeal1995,MacKay1999}. Their early capacity results used optimal decoding, including for symmetric stationary ergodic noise~\cite{MacKay1999}. Spatially coupled MN constructions subsequently showed numerical thresholds near capacity with bounded degrees~\cite{KasaiSakaniwa2011}. Rigorous results followed for the BEC~\cite{OkazakiKasai2014} and GECs with memory~\cite{FukushimaOkazakiKasai2015}; the latter used the BEC potential comparison that motivates this paper. The recent fixed-degree BMS result~\cite{Kasai2026} supplies the nonerasure comparison theorem. Our contribution is to connect it to BCJR detection on a channel with memory through the entropy correction and a summable detector-interaction bound.

\subsection{Channel and result}

\begin{definition}[Symmetric Markov noise]\label{def:channel}

Let $0<p<1/2$. Independently of the entire input, let $E_1$ be fair and let $Z_2,Z_3,\ldots$ be iid Bernoulli$(p)$, independent of $E_1$. The channel is

\begin{equation}\label{eq:channel}
 E_t=E_{t-1}\oplus Z_t\quad(t\geq2),\qquad Y_t=X_t\oplus E_t,
 \qquad Q=\begin{pmatrix}1-p&p\\p&1-p\end{pmatrix}.
\end{equation}

Here $X_t,Y_t,E_t$ are the input, output, and error bits at time $t$; $Z_t$ indicates a noise-state change, $p$ is its probability, and $Q$ is the noise transition matrix. The receiver observes $Y$ and knows $p$; the noise is not revealed. There is no feedback. The stationary noise runs continuously throughout transmission.

\end{definition}

The state in the additive channel of \cref{def:channel} is the error bit itself. Write $E^n=(E_1,\ldots,E_n)$, and use the same superscript convention for other sequences. With binary entropy $h_2(p)=-p\log_2p-(1-p)\log_2(1-p)$, channel capacity $C$, and normalized block information $C_n$, we have

\begin{equation}\label{eq:capacity}
 H(E^n)=1+(n-1)h_2(p),\qquad C=1-h_2(p),
 \qquad C_n:=1-\frac{H(E^n)}n=\left(1-\frac1n\right)C.
\end{equation}

Indeed, $I(X^n;Y^n)\leq n-H(E^n)$ with equality for iid fair inputs. For iid fair inputs the normalized information density is $1+n^{-1}\log_2 P(E^n)$, which converges almost surely to $C$. The direct coding theorem~\cite[Theorem~2]{VerduHan1994} and the preceding mutual-information upper bound therefore give the operational capacity, also covered by the stationary-ergodic additive-noise result~\cite[Corollary~2]{Alajaji1995}. Thus capacity equals the symmetric information rate. Each isolated output bit conveys no information about its input, since $E_t$ is fair. Information comes from temporal dependence: $Y_t\oplus Y_{t-1}=X_t\oplus X_{t-1}\oplus Z_t$. This makes the role of the detector explicit.

In $(\ell,r,g)$-MN notation, punctured variables have degree $\ell$, transmitted variables have degree $g$, and each check has $r$ punctured and $g$ transmitted sockets. We use $r=g=3$; \cref{sec:codes} specifies the ensemble.

\begin{theorem}[Finite-code capacity criterion]\label{thm:main}

Fix the symmetric Markov-noise channel of \cref{def:channel} and an integer $\ell\geq4$ whose design rate $R=3/\ell$ is below the channel capacity $C$ in \cref{eq:capacity}. The terminated, uniformly smoothed $(\ell,3,3)$-MN ensembles of \cref{sec:codes} admit deterministic realizations whose actual transmitted rate tends to $R$ and whose average information-bit error tends to zero under matched finite-window BCJR detection and sum-product decoding (SPA). The average includes uniform messages, channel noise, and fair random tie-breaking. The construction may depend on $p$. The degrees remain fixed; width, chain length, iteration count, window radius, and section size are selected in that order.

\end{theorem}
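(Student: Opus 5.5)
The plan follows the GEC-to-BEC template and has three stages: density evolution for the joint detector–decoder, a reduction of its fixed points to a BMS comparison, and a passage from asymptotics to finite codes.

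\textbf{Stage 1: density evolution with a detector.} I will track the coupled joint detector--decoder density evolution (DE) for the $(\ell,3,3)$-MN chain. The ensemble is terminated and uniformly smoothed with width $w$ and chain length $L$. The detector is matched finite-window BCJR with window radius $k$. Its extrinsic output density is a functional $\Phi_k$ of the incoming decoder message density, and it is again symmetric, because the noise is additive and $E_1$ is fair. Standard tree-like neighborhood and concentration arguments, as for ISI and Gilbert--Elliott channels, will show two things as the section size $M\to\infty$. First, the bit error after $T$ iterations concentrates on the DE prediction, with the window $k$ fixed. Second, letting $k\to\infty$ perturbs $\Phi_k$ by an amount that vanishes uniformly. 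The second fact will come from geometric forgetting of the two-state posterior, with contraction coefficient controlled by $1-2p$.

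\textbf{Stage 2: reduction to a BMS potential comparison.} For every fixed point of the uncoupled system, I will define an effective BMS channel as the detector output density at that fixed point. The key identity is a conditional-entropy correction. By the chain rule and the fact that $H(E^n)=1+(n-1)h_2(p)$, the average entropy of the detector extrinsic output plus a nonnegative correction term equals $h_2(p)$. Hence the effective BMS channel has capacity at least $C$ minus a deficit, and that deficit is exactly the correction term. I then invoke the cited fixed-degree BMS positivity theorem of \cite{Kasai2026}: for $3/\ell<C_{\bms}$, the $(\ell,3,3)$-MN potential of every nontrivial BMS fixed point is strictly positive. The correction term enters the potential with the right sign, so the deficit is absorbed and positivity transfers to the channel with memory. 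Threshold saturation then needs a derivative (Lipschitz) bound on $\Phi$ with respect to the input density, taken in the Wasserstein or Battacharyya-type metric used for BMS saturation. I will obtain it from the two-state posterior contraction. Perturbing the message at position $j$ changes the output at position $t$ by at most $C(1-2p)^{|t-j|}$-type factors, which are summable. With positivity and this bound, the coupled-chain argument of Kudekar--Richardson--Urbanke/Kumar et al.\ gives convergence to the perfect fixed point $\perfect$ for large $w$ and then large $L$.

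\textbf{Stage 3: finite codes.} I will choose the parameters in the order stated in the theorem. Given $\epsilon$, pick $w$ so that saturation holds. Then pick $L$ so that the rate loss from termination is below $\epsilon$. Next pick the iteration count $T$ so that DE error is below $\epsilon$. Then pick the window $k$ so that the detector approximation error is below $\epsilon$. Finally pick the section size $M$ for concentration. Averaging over the ensemble, together with a Markov inequality on rate deficiency, yields deterministic codes whose actual rate tends to $3/\ell$.

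\textbf{Main obstacle.} I expect the summable interaction/derivative bound for BCJR to be the hardest step, since it must hold uniformly over input densities, including nearly uninformative ones. I will first try to prove it via Dobrushin contraction of the forward--backward recursions, using that every transition kernel entry is at least $p>0$. I will also check that the entropy correction has the needed sign at every fixed point, not only at the stationary one.
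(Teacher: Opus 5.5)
Your architecture matches the paper's: freeze the detector output $c=e(q_2^{\star g})$ at a fixed point as an effective BMS channel, correct by a conditional-entropy term, and get summable interactions from the two-state contraction with $\rho=1-2p$. The transfer step in Stage 2, however, has a genuine gap. The cited BMS theorem says nothing unless $C(c)>R$, and the frozen channel need not satisfy this. At the uninformative point $c=e(\Delta_0)=\Delta_0$, so $C(c)=0$ and $U_\bms=-R$. Knowing that the correction is nonnegative gives only $U\ge U_\bms$, which is useless there. Your identity is also misstated: with no helpers the extrinsic law is $\Delta_0$, of entropy $1>h_2(p)$, so ``extrinsic entropy plus a nonnegative term equals $h_2(p)$'' would force $C(c)\ge C$ always.

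Two ingredients are needed. First, the inequality $\corr(d)=H(c\star d)-h(d)\ge C-C(c)$. For general helpers it follows from the helper chain rule $I(X^n;\Lambda^n\mid Y^n)\ge\sum_i I(X_i;\Lambda_i\mid Y^n,\Lambda_{-i})$. Second, a quantitative lower bound on the BMS potential below a comparison capacity $\tau\in(R,C)$: every fixed point of $\mathsf T_c$ with $C(c)\le\tau$ has $U_\bms(x;c)\ge C(c)-\tau$. The paper proves this by upgrading $c$ to $c_\tau=(1-\lambda)c+\lambda\Delta_\infty$ with $C(c_\tau)=\tau$. It notes $\mathsf T_{c_\tau}(x)\preceq x$, runs ordered descent to a fixed point of nonnegative potential, and bounds the change by $\lambda H(c\star d)\le\tau-C(c)$. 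Together these give $U\ge[C(c)-\tau]+[C-C(c)]=C-\tau$. For $C(c)\ge\tau$ you need the BMS gap uniformly over all such $c$, not pointwise positivity. The frozen channel varies from one fixed point to another, and the width condition $w>K_p/(2\Delta)$ needs a single $\Delta>0$.

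Two further steps fail as written. The first is your detector estimate, a first-order Lipschitz bound on $\Phi$. It decays like $\rho^{|t-j|}$ only in a Wasserstein-type metric. Such a bound does not pair directly with the ordered entropy increments of the shift argument, because the entropy kernel is not Lipschitz at perfect information. In total variation, the influence of a distant helper on the extrinsic law need not decay at all. The paper instead bounds the mixed second derivative of the block entropy, $|D^2h_n(d)[u-v,a-b]|\le J_p\TV(u,v)[H(b)-H(a)]$. Its proof shows that the information a better helper at $j$ adds is shared with the helper at $i$ only through a posterior channel of Dobrushin coefficient $\rho^{|i-j|}$. This matches a TV direction of size $1/w$ per shift leaf with entropy increments summing to at most one.

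The second is the actual rate. The transmitted code is a projection, $k_{\mathrm{tx}}=k_{\mathrm{ext}}-\dim\ker A$, and counting gives only $k_{\mathrm{ext}}\ge n(R-(w-1)/L)$. A Markov inequality therefore has no bound on the expected deficiency to act on, and nothing bounds the rate from above. The paper runs the same graph on an auxiliary BEC of capacity $\alpha\downarrow R$, using imported BMS saturation, and applies Fano: $\dim\ker A\le n_u h_2(e_u)$ and $k_{\mathrm{tx}}/n\le\alpha+h_2(e_v)$. It then selects one realization whose four bit-error terms sum to at most their expectation. For that, average-error convergence suffices and concentration is unnecessary. You do need a Markov-mixing coupling, though, because noise in separated windows stays correlated, and the ISI argument does not cover this.
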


The proof uses the BMS results summarized in \cref{sec:prelim}, from the preprint~\cite{Kasai2026}. Its proof is analytic for $\ell\geq33$ and uses exact interval certificates for $4\leq\ell\leq32$. This dependency remains; the present paper proves the transfer to Markov noise. The result concerns the discrete rates $3/\ell$ below capacity, not arbitrary rates at one fixed degree.

\subsection{Proof outline and relation to the GEC argument}

The capacity criterion in \cref{thm:main} follows in three steps. First, separate the perfect-decoding point, evaluate the trivial fixed-point set $\Ft$, and compare the nontrivial fixed-point set $\Fnt$ with BMS fixed points; these sets are defined in \cref{def:fixedpoints}. As in the GEC proof, the trivial point has potential equal to information rate minus code rate; here its channel-parametrized branch collapses to one point in density space. The GEC area correction becomes a conditional-entropy correction $\corr$, giving the potential decomposition $U=U_\bms+\corr$ in \cref{eq:reduction}. Second, prove that $\corr$ compensates even when the effective BMS channel has capacity below $R$. The correction bound of \cref{lem:correction} and the BMS comparison of \cref{lem:bms} yield the uniform positive gap of \cref{thm:gap}. Third, use the posterior contraction of \cref{lem:contraction}, with factor $\rho=1-2p$, to bound detector interactions. The mixed-derivative bound in \cref{eq:hessian} then supplies the additional term in the threshold-saturation proof of \cref{thm:saturation}. The first two steps form \cref{sec:reduction}; the third uses \cref{sec:memory,sec:coupling}. Finite-window density evolution and the rate argument in \cref{sec:codes} then yield finite codes.

Thus the analogy is a reduction of potential positivity, not an identification of the full memory-channel recursion with one fixed memoryless channel. The GEC proof uses BEC positivity; this proof uses BMS positivity and one additional contraction estimate.

\Cref{sec:extensions} separates the parts of this argument that extend beyond the present channel from the detector properties that require further proof.

\section{Preliminaries}\label{sec:prelim}

A BMS law $d$ is represented canonically as a mixture of binary symmetric channels (BSCs) with the crossover probability $\varepsilon\in[0,1/2]$ revealed. Its reliability is $V=1-2\varepsilon$, its conditional entropy is $H(d)=\E h_2(\varepsilon)$, its capacity is $C(d)=1-H(d)$, and its Bhattacharyya parameter is $Z(d)=\E\sqrt{1-V^2}$. Thus $C(d)$ concerns a BMS law, whereas $C$ in \cref{eq:capacity} is the physical Markov-channel capacity. The laws $\Delta_0$ and $\Delta_\infty$ denote absent and perfect information. Variable convolution $a\star b$ combines independent observations of one bit; check convolution $a\cc b$ observes a parity. Powers $d^{\star k}$ and $d^{\cc k}$ denote $k$ repeated variable and check convolutions, respectively. We use the standard BMS calculus~\cite[Sec.~II]{Kumar2014}:

\begin{equation}\label{eq:duality}
 H(a\star b)+H(a\cc b)=H(a)+H(b),\qquad
 d\star\Delta_0=d,\quad d\cc\Delta_\infty=d.
\end{equation}

Also $d\star\Delta_\infty=\Delta_\infty$ and $d\cc\Delta_0=\Delta_0$. Entropy extends linearly to signed measures. We write $a\preceq b$ when $a$ is more informative than $b$, and $\TV(a,b)$ for total variation of their canonical reliability laws. Mixture labels are always revealed. Weak convergence refers to laws on the compact interval $[0,1]$; the bounded-Lipschitz distance $\operatorname{dist}_{\mathrm{BL}}$, with test-function norm and Lipschitz constant at most one, metrizes this topology.

\paragraph{BMS results used.}
For density evolution (DE) of the same balanced, uniformly smoothed $(\ell,3,3)$ ensemble, $\ell\geq4$, put $R=3/\ell$ and $\perfect=(\Delta_\infty,\Delta_\infty)$. The cited BMS paper proves: for every BMS law $c$, each fixed point other than $\perfect$ has $\max\{Z(a),Z(b)\}\geq1/25$; if $C(c)>R$, its potential is positive. Uniformly over $C(c)\geq R+\delta$, $0<\delta\leq1-R$, the nonperfect fixed-point gap is positive~\cite[Lemma~6.1, Theorem~6.3, Corollary~6.4]{Kasai2026}. For a fixed $c$, an ordered update decreases potential~\cite[Lemma~7.1]{Kasai2026}. With both variable types shortened, capacity $C(c)>R$ implies convergence of both variable-type DE errors to zero on every finite chain for sufficiently large width~\cite[Corollary~7.4]{Kasai2026}. These are the imported results; the detector estimates and their finite-code realization are proved below.

\begin{definition}[Helpers and extrinsic laws]\label{def:helpers}

For analysis, use iid fair inputs $X^n$. Conditional on $X^n$, generate independent BMS observations $\Lambda_i$ of laws $d_i$, also independent of the physical output. Write $\boldsymbol d=(d_1,\ldots,d_n)$ and define

\begin{equation}\label{eq:primitive}
 F_n(\boldsymbol d)=H(X^n\mid Y^n,\Lambda^n),\qquad
 h_n(d)=\frac1nF_n(d,\ldots,d).
\end{equation}

Here $F_n$ is the remaining block uncertainty, and $h_n$ its per-bit value for common helper law $d$. The extrinsic experiment at $i$ observes $(Y^n,\Lambda_{-i})$, where $\Lambda_{-i}$ contains all helpers except $i$; $\Lambda_{<i},\Lambda_{>i}$ contain those before and after $i$. Let $e_n(d)$ be its coordinate average with the coordinate label revealed. Let $e_W^{\mathrm{win}}(d)$ be the analogous stationary detector at the center of a radius-$W$ window.

\end{definition}

The helpers in \cref{def:helpers} represent incoming SPA messages on a computation tree. Omitting the root helper prevents the detector from directly reusing that incoming observation. For the length-$n$ channel law $W_n(y\mid x)=\Pr(Y^n=y\mid X^n=x)$, the identity $W_n(y\oplus v\mid x\oplus v)=W_n(y\mid x)$ makes every extrinsic experiment BMS and permits the all-zero-codeword analysis. No transmitted dither is needed.

The matched BCJR detector~\cite{BCJR1974} has two states. Given a prior $\nu_t$ for $X_t$, use the likelihood $D_t=\operatorname{diag}\{\nu_t(y_t),\nu_t(y_t\oplus1)\}$. The forward and backward recursions multiply by $D_tQ$ and $QD_t$, respectively, with stationary left boundary and an all-one right vector. At the root omit $D_t$. Neither endpoint noise bit is revealed.

We use the limits

\begin{equation}\label{eq:limits}
 h(d)=\lim_{n\to\infty}h_n(d),\qquad e(d)=\lim_{W\to\infty}e_W^{\mathrm{win}}(d)
                  =\lim_{n\to\infty}e_n(d).
\end{equation}

The limiting conditional entropy $h(d)$ and extrinsic law $e(d)$ in \cref{eq:limits} exist uniformly and are continuous by the detector regularity result, \cref{lem:regularity}. In particular, $h(\Delta_0)=1-C$, $h(\Delta_\infty)=0$, and $e(\Delta_0)=\Delta_0$. The last equality follows because complementing all input and noise bits preserves every observation law when no helpers are present.

\section{The BMS reduction}\label{sec:reduction}

\subsection{The counterpart of the GEC area correction}

\begin{lemma}[Entropy correction]\label{lem:correction}

For the independent helpers of \cref{def:helpers} with common BMS law $d$, let $c=e(d)$ be the limiting extrinsic law from \cref{eq:limits}. The entropy correction satisfies

\begin{equation}\label{eq:correction}
 \corr(d):=H(c\star d)-h(d)\geq\max\{0,C-C(c)\}.
\end{equation}

\end{lemma}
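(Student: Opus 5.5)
The plan is to prove the two lower bounds separately. In both we work at finite block length $n$, using the chain rule for $F_n$, and then pass to the limit with the regularity result \cref{lem:regularity}. Since the helpers are conditionally independent given $X^n$ and $W_n$ is input-symmetric, the posterior of $X_i$ given $(Y^n,\Lambda^n)$ is the variable convolution of the extrinsic law at $i$ with the helper law $d$. Averaging over $i$ with the label revealed, and using linearity of entropy over mixtures, gives
\[
\frac1n\sum_{i=1}^n H(X_i\mid Y^n,\Lambda^n)=H\bigl(e_n(d)\star d\bigr).
\]

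\emph{Nonnegativity.} By the chain rule and because conditioning reduces entropy,
\[
F_n(d,\ldots,d)=\sum_i H(X_i\mid X^{i-1},Y^n,\Lambda^n)\le\sum_i H(X_i\mid Y^n,\Lambda^n)=n\,H\bigl(e_n(d)\star d\bigr).
\]
Dividing by $n$ gives $h_n(d)\le H(e_n(d)\star d)$. Now let $n\to\infty$. By \cref{eq:limits} and \cref{lem:regularity}, $e_n(d)\to c$ weakly. Variable convolution with a fixed $d$ is weakly continuous, and $H$ is a bounded continuous functional of the reliability law. Hence $h(d)\le H(c\star d)$, that is, $\corr(d)\ge0$.

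\emph{The capacity-deficit bound.} Since $C-C(c)=H(c)-(1-C)=H(c)-h(\Delta_0)$, it suffices to show
\[
h(\Delta_0)-h(d)\ge H(c)-H(c\star d).
\]
We reveal the helpers one at a time, in index order. Let $O_i=(Y^n,\Lambda_{<i})$. Each increment equals
\[
H(X^n\mid O_i)-H(X^n\mid O_i,\Lambda_i)=I(\Lambda_i;X_i\mid O_i)=H(c_i')-H(c_i'\star d),
\]
where $c_i'$ is the BMS law of $X_i$ given $O_i$. This uses that $\Lambda_i$ depends on $X^n$ only through $X_i$, and again uses symmetry. By the duality \cref{eq:duality}, $H(c')-H(c'\star d)=H(c'\cc d)-H(d)$. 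The map $c'\mapsto H(c'\cc d)$ is monotone under degradation: it is larger for less informative $c'$. The observation $O_i$ is a degraded version of the extrinsic observation $(Y^n,\Lambda_{-i})$, since it simply drops $\Lambda_{>i}$. Writing $c_i$ for the extrinsic law at $i$, each increment is therefore at least $H(c_i)-H(c_i\star d)$. Summing over $i$ and dividing by $n$ gives
\[
h_n(\Delta_0)-h_n(d)\ge H\bigl(e_n(d)\bigr)-H\bigl(e_n(d)\star d\bigr),
\]
where we use linearity of entropy over the revealed coordinate mixture. Letting $n\to\infty$ as before yields the claim. Combining this with nonnegativity gives $\corr(d)\ge\max\{0,C-C(c)\}$.

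The main obstacle is the limit passage. We must make sure that the coordinate-averaged finite-$n$ extrinsic law $e_n(d)$, which includes boundary coordinates, converges to the stationary law $c=e(d)$. We must also make sure that $H(\cdot)$ and $H(\cdot\star d)$ are continuous along this convergence. I would take both facts from \cref{lem:regularity}, which asserts uniform existence and continuity of the limits in \cref{eq:limits}. The remaining point is to note that $H$ is a bounded continuous integral of $h_2((1-V)/2)$ against the reliability law. The monotonicity of $H(\cdot\cc d)$ under degradation is the standard BMS fact from \cite[Sec.~II]{Kumar2014}.
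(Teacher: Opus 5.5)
Your proof is correct and follows the paper's argument. Nonnegativity comes from posterior total correlation. The capacity-deficit bound uses the helper chain rule, comparing each sequential gain $I(X_i;\Lambda_i\mid Y^n,\Lambda_{<i})$ with the corresponding extrinsic gain, and then averages and passes to the limit via \cref{lem:regularity}.

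The only difference is how you justify the per-term comparison. You use the duality $H(c')-H(c'\star d)=H(c'\cc d)-H(d)$ together with degradation monotonicity. The paper instead identifies the difference directly as $I(\Lambda_i;\Lambda_{>i}\mid Y^n,\Lambda_{<i})\geq0$, which does not require the intermediate experiment to be BMS.
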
\begin{proof}

First work in a finite block and put $c_n=e_n(d)$. Conditional independence of the local helper and extrinsic observation gives

\begin{equation}\label{eq:totalcorrelation}
 \corr_n(d):=H(c_n\star d)-h_n(d)
 =\frac1n\left\{\sum_i H(X_i\mid Y^n,\Lambda^n)-H(X^n\mid Y^n,\Lambda^n)\right\}\geq0.
\end{equation}

The correction in \cref{eq:totalcorrelation} is posterior total correlation per bit~\cite{Watanabe1960}. A stronger bound follows from the helper chain rule:

\begin{align}\label{eq:helpergain}
 I(X^n;\Lambda^n\mid Y^n)
 &=\sum_i I(X_i;\Lambda_i\mid Y^n,\Lambda_{<i})\nonumber\\
 &\geq\sum_i I(X_i;\Lambda_i\mid Y^n,\Lambda_{-i}).
\end{align}

The difference of the summands is $I(\Lambda_i;\Lambda_{>i}\mid Y^n,\Lambda_{<i})\geq0$, since a local helper depends only on its input bit. Dividing by $n$ gives $h_n(\Delta_0)-h_n(d)\geq H(c_n)-H(c_n\star d)$, or $\corr_n(d)\geq C_n-C(c_n)$. Use the entropy and detector limits in \cref{eq:limits} and the block-information limit $C_n\to C$ in \cref{eq:capacity} to obtain the capacity-deficit bound in \cref{eq:correction}.

\end{proof}

To see the GEC correspondence, vary one helper through a revealed mixture $d_i(t)=(1-t)u+t v$. Let $e_i$ denote the extrinsic BMS law at coordinate $i$ with all other helper laws held fixed. Write $Dh_n(d)[v-u]$ for the derivative at $d$ in the signed direction $v-u$, with the same perturbation applied at every helper position. The chain rule, conditioning first on $X_i$, gives

\begin{equation}\label{eq:variation}
 \frac{\dd F_n}{\dd t}=H(e_i\star(v-u)),\qquad
 Dh_n(d)[v-u]=H(e_n(d)\star(v-u)).
\end{equation}

With erasure helpers and $\phi_n(x)=H(e_n(\mathrm{BEC}(x)))$, integration from perfect helpers yields

\begin{equation}\label{eq:gec}
 h_n(\mathrm{BEC}(x))=\int_0^x\phi_n(u)\,\dd u,\qquad
 \corr_n(\mathrm{BEC}(x))=x\phi_n(x)-\int_0^x\phi_n(u)\,\dd u.
\end{equation}

For a GEC, $\phi_n$ is the extrinsic erasure probability, recovering the area correction~\cite[Lemma~3]{FukushimaOkazakiKasai2015}. The entropy-correction bound in \cref{eq:correction} is its information-theoretic replacement for nonerasure messages.

\subsection{Fixed points and potential}

Throughout, $r=g=3$, $\ell\geq4$, and $R=r/\ell$. Let $x=(a,b)$ be the variable-to-check BMS laws for punctured and transmitted variables, respectively. The check outputs are $q_1,q_2$, and $v$ is the parity-observation law of a whole check. The pair $y=(y_1,y_2)$ denotes incoming check-message laws for the two variable types. Retaining $r,g$ only to identify socket types, define the check update $\mathsf g$, variable-and-detector update $\mathsf f$, and full update $\mathsf T$ by

\begin{align}\label{eq:maps}
 q_1&=a^{\cc(r-1)}\cc b^{\cc g},&q_2&=a^{\cc r}\cc b^{\cc(g-1)},&
 v&=a\cc q_1=b\cc q_2,\nonumber\\
 \mathsf g(a,b)&=(q_1,q_2),&\mathsf f(y)&=(y_1^{\star(\ell-1)},e(y_2^{\star g})\star y_2^{\star(g-1)}),&
 \mathsf T&=\mathsf f\circ\mathsf g.
\end{align}
\begin{align}\label{eq:potential}
 \mathcal Q(a,b)&=-(r+g-1)H(v)+rH(q_1)+gH(q_2),\nonumber\\
 \mathcal P(y)&=R H(y_1^{\star\ell})+h(y_2^{\star g}),\qquad
 U=\mathcal Q-\mathcal P\circ\mathsf g.
\end{align}

In \cref{eq:potential}, $\mathcal Q$ collects the check contributions, $\mathcal P$ the variable and detector contributions, and $U$ is the uncoupled potential. The ordinary BMS potential~\cite[Sec.~4]{Kasai2026} replaces $h(d)$ by $H(c\star d)$. Consequently,

\begin{equation}\label{eq:reduction}
 U(a,b)=U_\bms(a,b;c)+\corr(d),\qquad d=q_2^{\star g},\quad c=e(d).
\end{equation}

\begin{definition}[Trivial and nontrivial fixed points]\label{def:fixedpoints}

Separate the perfect-decoding point $\perfect=(\Delta_\infty,\Delta_\infty)$ and define

\begin{equation}\label{eq:fixedpoints}
 \mathcal F=\{x\ne\perfect:\mathsf T(x)=x\},\qquad
 \Ft=\{(\Delta_0,\Delta_0)\},\qquad \Fnt=\mathcal F\setminus\Ft.
\end{equation}

Thus $\mathcal F$ contains only nonperfect fixed points. Its trivial part is $\Ft$ and its nontrivial part is $\Fnt$, following the classification in the GEC proof~\cite[Sec.~III-A]{FukushimaOkazakiKasai2015}.

\end{definition}

The perfect point and the point in $\Ft$ are fixed. If $a=\Delta_0$, then $q_1=q_2=\Delta_0$ and $b'=e(\Delta_0)=\Delta_0$, so there are no other fixed points with $a=\Delta_0$. Direct substitution gives

\begin{equation}\label{eq:trivialvalues}
 U(\perfect)=0,\qquad U(\Delta_0,\Delta_0)=1-R-(1-C)=C-R.
\end{equation}

In the GEC proof, with detector erasure transfer $\phi$ and channel parameter $\epsilon$, the trivial point is $(x_1,x_2)=(1,\phi(1;\epsilon))$~\cite[Sec.~III-A, Lemma~1]{FukushimaOkazakiKasai2015}. For each fixed channel it is one point; varying $\epsilon$ gives a one-parameter branch on $x_1=1$. For the BEC this is $(1,\epsilon)$. Its density-valued counterpart is $(\Delta_0,e_p(\Delta_0))$, where the subscript makes the channel parameter explicit. Here $e_p(\Delta_0)=\Delta_0$ for every $p$: without helpers, simultaneous complementation of the input and noise leaves the observation and their joint law unchanged. Every individual input bit therefore remains fair given the output. The branch collapses to one point in density space, although $(\Delta_0,\Delta_0,p)$ still varies with $p$.

Its role in the proof is unchanged: the GEC trivial-point potential is $I(\epsilon)-R$, where $I$ is the symmetric information rate; the trivial-point evaluation in \cref{eq:trivialvalues} gives $C(p)-R$ here. After separating this value from $U(\perfect)=0$, only $\Fnt$ needs the BMS comparison. At the uninformative point the frozen BMS channel has capacity zero, so $U_\bms=-R$ while $\corr=C$. This example within the model shows why nonnegativity of $\corr$ alone would not justify applying a BMS theorem that requires capacity above $R$.

\subsection{Transferring the positive gap}

\begin{lemma}[BMS potential comparison]\label{lem:bms}

Let $\mathsf T_c$ be the BMS update obtained from the maps in \cref{eq:maps} by replacing the detector output $e(y_2^{\star g})$ by a fixed channel law $c$. For a comparison capacity $R<\tau<1$, the BMS MN theorem gives the uniform potential gap

\begin{equation}\label{eq:gamma}
 \Gamma_\ell(\tau):=\inf\{U_\bms(x;c):\mathsf T_c(x)=x,\ x\ne\perfect,\ C(c)\geq\tau\}>0.
\end{equation}

For any BMS fixed point $x$ with $C(c)\leq\tau$, one also has $U_\bms(x;c)\geq C(c)-\tau$.

\end{lemma}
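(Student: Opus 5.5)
The first assertion is a restatement of the imported uniform gap. For $R<\tau<1$, put $\delta=\tau-R\in(0,1-R]$. Then \cite[Corollary~6.4]{Kasai2026} states that, uniformly over BMS laws $c$ with $C(c)\geq R+\delta=\tau$, every fixed point of $\mathsf T_c$ other than $\perfect$ has potential at least a positive constant. That constant is a lower bound for $\Gamma_\ell(\tau)$, so \cref{eq:gamma} follows. The only thing to check is that $\mathsf T_c$ and $U_\bms(\cdot;c)$, built from \cref{eq:maps} with $e(y_2^{\star g})$ frozen to $c$, coincide with the balanced, uniformly smoothed $(\ell,3,3)$ update and potential of \cite[Sec.~4]{Kasai2026}. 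This holds by construction, as noted after \cref{eq:potential}.

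For the second assertion, fix a fixed point $x=(a,b)$ of $\mathsf T_c$ with $C(c)\leq\tau$, and put $d=q_2^{\star g}$. I would upgrade the channel by a revealed mixture with the perfect law,
\begin{equation*}
 c_t=(1-t)\,c+t\,\Delta_\infty,\qquad t=\frac{\tau-C(c)}{1-C(c)}\in[0,1).
\end{equation*}
Since entropy is linear in revealed mixtures, $C(c_t)=C(c)+t(1-C(c))=\tau$. In $U_\bms(x;\cdot)$ only the term $-H(c\star d)$ depends on the channel. Linearity and $\Delta_\infty\star d=\Delta_\infty$ give $H(c_t\star d)=(1-t)H(c\star d)$. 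Hence
\begin{equation*}
 U_\bms(x;c)-U_\bms(x;c_t)=-t\,H(c\star d)\geq -t\,H(c)=-t\,(1-C(c))=C(c)-\tau .
\end{equation*}
The inequality uses $H(c\star d)\leq H(c)$, which holds because variable convolution adds an independent observation. It therefore suffices to show $U_\bms(x;c_t)\geq0$.

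This last step is the main obstacle, because $x$ is a fixed point of $\mathsf T_c$ but not of $\mathsf T_{c_t}$. I would handle it with the ordered update. Since $c_t\preceq c$ and the maps are monotone under degradation, $\mathsf T_{c_t}(x)\preceq\mathsf T_c(x)=x$. Iterating $\mathsf T_{c_t}$ from $x$ therefore gives a monotonically upgrading sequence $x^{(k)}$. By \cite[Lemma~7.1]{Kasai2026}, $U_\bms(x^{(k)};c_t)$ is nonincreasing in $k$. The monotone sequence converges weakly, and by continuity of $\mathsf T_{c_t}$ its limit $x^\ast$ is a fixed point of $\mathsf T_{c_t}$. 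Continuity of the entropy functionals under weak convergence then gives
\begin{equation*}
 U_\bms(x;c_t)\geq U_\bms(x^\ast;c_t).
\end{equation*}
There are two cases for the limit. If $x^\ast=\perfect$, then $U_\bms(x^\ast;c_t)=0$. Otherwise $x^\ast$ is a nonperfect fixed point with $C(c_t)=\tau$, so $U_\bms(x^\ast;c_t)\geq\Gamma_\ell(\tau)>0$ by the first part. In both cases $U_\bms(x;c_t)\geq0$, and therefore $U_\bms(x;c)\geq C(c)-\tau$.

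The details I expect to need care are the following. First, the precise hypothesis under which \cite[Lemma~7.1]{Kasai2026} applies: if it is stated for a fixed channel starting from a fixed point of a degraded channel, that is exactly this situation. Second, weak convergence of the monotone iterates on the compact space of reliability laws.
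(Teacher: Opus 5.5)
Your proposal is correct and follows essentially the same route as the paper. The first assertion is \cite[Corollary~6.4]{Kasai2026} with $\delta=\tau-R$. For the second, you upgrade $c$ by the revealed mixture with $\Delta_\infty$ to capacity exactly $\tau$. You then use $\mathsf T_{c_t}(x)\preceq x$, ordered descent and weak continuity to obtain $U_\bms(x;c_t)\geq0$, and bound the cost of the channel change by $tH(c\star d)\leq tH(c)=\tau-C(c)$.
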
\begin{proof}

The first assertion is exactly \cite[Corollary~6.4]{Kasai2026}, with $\delta=\tau-R$.

For the second assertion, improve $c$ by the revealed mixture $c_\tau=(1-\lambda)c+\lambda\Delta_\infty$, where $\lambda=(\tau-C(c))/(1-C(c))$. Then $\mathsf T_{c_\tau}(x)\preceq x$. Ordered descent~\cite[Lemma~7.1]{Kasai2026} and weak continuity give a limiting fixed point of nonnegative potential, hence $U_\bms(x;c_\tau)\geq0$: the limit has potential zero if it is $\perfect$, and positive potential otherwise since $C(c_\tau)=\tau>R$. At the original point $x$, keep the combined helper law $d=q_2^{\star g}$ from the potential decomposition \cref{eq:reduction} fixed. Only the channel contribution to the potential changes, and

\begin{equation}\label{eq:upgrade}
 U_\bms(x;c_\tau)-U_\bms(x;c)=\lambda H(c\star d)
 \leq\lambda H(c)=\tau-C(c).
\end{equation}

This proves the assertion.

\end{proof}\begin{theorem}[Uniform positive potential gap]\label{thm:gap}

If $C>R$, every fixed point other than $\perfect$ has potential at least

\begin{equation}\label{eq:gap}
 \Delta:=\min\{C-\tau,\Gamma_\ell(\tau)\}>0,\qquad \tau=(C+R)/2.
\end{equation}

Equivalently, $\min\{C-R,\inf_{x\in\Fnt}U(x)\}\geq\Delta$, with $\inf\varnothing=+\infty$.

\end{theorem}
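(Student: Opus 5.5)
The plan is to split the nonperfect fixed points into the trivial point and the nontrivial set $\Fnt$, and on $\Fnt$ to split according to the capacity of the effective channel $c=e(d)$ relative to $\tau=(C+R)/2$. Note first that $R<\tau<C\leq1$, so $\tau$ is an admissible comparison capacity in \cref{lem:bms} and both quantities in \cref{eq:gap} are positive. For the trivial point, \cref{eq:trivialvalues} gives $U(\Delta_0,\Delta_0)=C-R>C-\tau\geq\Delta$. This disposes of $\Ft$ and the first entry of the equivalent formulation.

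For $x\in\Fnt$, put $d=q_2^{\star g}$ and $c=e(d)$ as in \cref{eq:reduction}. The first step is to check that a fixed point of the memory-channel map $\mathsf T$ is also a fixed point of the frozen BMS map $\mathsf T_c$. This holds because $\mathsf f$ in \cref{eq:maps} uses the detector output only through $e(y_2^{\star g})$, and at the fixed point this law equals $c$ by definition. Hence $x$ is a nonperfect fixed point of $\mathsf T_c$. By \cref{eq:reduction}, $U(x)=U_\bms(x;c)+\corr(d)$, and the two cases follow.

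\textbf{Case $C(c)\geq\tau$.} Here $U_\bms(x;c)\geq\Gamma_\ell(\tau)$ by \cref{lem:bms}, and $\corr(d)\geq0$ by \cref{lem:correction}. Therefore $U(x)\geq\Gamma_\ell(\tau)\geq\Delta$.

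\textbf{Case $C(c)<\tau$.} The second assertion of \cref{lem:bms} gives $U_\bms(x;c)\geq C(c)-\tau$. \Cref{lem:correction} gives $\corr(d)\geq C-C(c)$. Adding the two bounds, the unknown capacity $C(c)$ cancels, leaving $U(x)\geq C-\tau\geq\Delta$. This cancellation is the central mechanism: the capacity deficit of the effective BMS channel is paid for exactly by the entropy correction.

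Combining the cases yields $U(x)\geq\Delta$ for every $x\ne\perfect$, and the infimum over an empty $\Fnt$ is $+\infty$ by convention. The main point needing care is the identification of fixed points of $\mathsf T$ with fixed points of $\mathsf T_c$ for the specific $c=e(d)$. One must also confirm that the hypotheses of \cref{lem:bms} hold for this $c$: the second assertion needs $C(c)\leq\tau<1$, so that the mixing weight $\lambda$ is well defined, which holds in the second case. The inequality $\tau<1$ follows from $R<C\leq1$. Beyond this, the argument is a direct combination of the two lemmas.
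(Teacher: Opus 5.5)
Your proposal is correct and follows the paper's proof step for step. You evaluate the trivial point at $C-R$; for $x\in\Fnt$ you freeze $c=e(q_2^{\star g})$, note $\mathsf T_c(x)=x$, and split on $C(c)\geq\tau$ (BMS gap $\Gamma_\ell(\tau)$ plus $\corr\geq0$) versus $C(c)<\tau$ (the bound $C(c)-\tau$ plus the correction $C-C(c)$), making explicit the small checks ($\corr\geq0$, $\tau<1$) that the paper leaves implicit.
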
\begin{proof}

The trivial fixed-point set $\Ft$ of \cref{def:fixedpoints} has potential $C-R$ by the explicit values in \cref{eq:trivialvalues}. For a nontrivial fixed point $x\in\Fnt$, freeze its extrinsic law $c=e(q_2^{\star g})$. The update equations in \cref{eq:maps} then give $\mathsf T_c(x)=x$: this particular point is also a BMS fixed point. The comparison law $c$ may differ between points. If its capacity $C(c)\geq\tau$, the potential decomposition in \cref{eq:reduction} and the BMS gap in \cref{eq:gamma} give $U(x)\geq\Gamma_\ell(\tau)$. Otherwise, the low-capacity BMS bound of \cref{lem:bms} may be negative, but the entropy correction of \cref{lem:correction} compensates for it:

\begin{equation}\label{eq:compensate}
 U(x)\geq[C(c)-\tau]+[C-C(c)]=C-\tau.
\end{equation}

The comparison capacity $\tau=(C+R)/2$ in \cref{eq:gap} lies strictly between the design rate and physical capacity. Thus the compensated bound $C-\tau$ in \cref{eq:compensate} is positive, and taking the smaller of the two bounds gives the same gap $\Delta$ at every nonperfect fixed point.

\end{proof}

\section{The one estimate needed for memory}\label{sec:memory}

The positive-gap theorem, \cref{thm:gap}, supplies the potential bound needed for saturation. To use the spatial-shift argument, we must also control how the detector changes when its helpers change. For the symmetric Markov channel of \cref{def:channel}, define a one-step contraction factor $\rho$ and a bound $J_p$ on the total influence of all other positions:

\begin{equation}\label{eq:rho}
 \rho=1-2p\in(0,1),\qquad J_p=2\sum_{k\geq1}\rho^k=\frac{2\rho}{1-\rho}.
\end{equation}
\begin{lemma}[Posterior contraction]\label{lem:contraction}

Use the contraction factor $\rho$ of \cref{eq:rho}. The Dobrushin coefficient is the largest total variation distance between two conditional output laws obtained by changing the input bit. After conditioning on the output and arbitrary local helpers, the channel from one bit to a bit $k$ positions away has Dobrushin coefficient at most $\rho^k$, on its conditional support. The same holds in the reverse direction and with perfect helpers.

\end{lemma}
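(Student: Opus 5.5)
The plan is to reduce the lemma to the standard contraction of a two-state Markov chain whose hidden state is the noise bit. Conditional on the output $Y^n=y$ and helpers $\Lambda$, the joint posterior of $X^n$ is determined by the posterior of $E^n$, since $X_t=y_t\oplus E_t$. We have $\Pr(E^n\mid y,\Lambda)\propto \pi(E_1)\prod_{t\geq2}Q(E_{t-1},E_t)\prod_t \psi_t(E_t)$, where $\psi_t(e)$ is the helper likelihood of $x_t=y_t\oplus e$; the output adds nothing beyond this relabeling. Hence the posterior of $E^n$ is a (time-inhomogeneous) Markov chain, and the posterior of $X^n$ is too, after the deterministic bijections $E_t\mapsto y_t\oplus E_t$. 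These bijections preserve total variation, so it suffices to bound the Dobrushin coefficient of the conditional law of $E_{i+k}$ given $E_i$ under this posterior chain.

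By the Markov property of the posterior, the transition from $E_i$ to $E_{i+k}$ factors into $k$ one-step posterior kernels $K_t(e,e')\propto Q(e,e')\psi_t(e')\beta_t(e')$. Here $\beta_t$ is the backward message from positions after $t$, with all-one right vector, and it is nonnegative. Dobrushin coefficients are submultiplicative under composition, so the proof reduces to one-step bounds $\delta(K_t)\leq\rho$. For two states, $\delta(K)=|K(0,0)-K(1,0)|$. Write $w=\psi_t\beta_t=(w_0,w_1)$. Then
\begin{equation*}
K(0,0)-K(1,0)=\frac{(1-p)w_0}{(1-p)w_0+pw_1}-\frac{pw_0}{pw_0+(1-p)w_1}
=\frac{(1-2p)\,w_0w_1}{((1-p)w_0+pw_1)(pw_0+(1-p)w_1)}.
\end{equation*}
Expanding the denominator gives $p(1-p)(w_0^2+w_1^2)+((1-p)^2+p^2)w_0w_1\geq w_0w_1$, using $w_0^2+w_1^2\geq2w_0w_1$. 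Hence $\delta(K_t)\leq 1-2p=\rho$, whatever the nonnegative weights are.

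Support issues need separate care. With perfect helpers, some $w_t$ may have a zero coordinate, in which case $\delta(K_t)=0$. If the conditioning state $E_i=e$ itself has zero posterior probability, the lemma restricts to the conditional support. The bound above holds whenever both rows are defined, and a row is defined because $Q$ has full support whenever $w\neq0$. Degenerate helper laws at the conditioning position are excluded by the phrase ``on its conditional support''. The reverse direction uses the same argument applied to the time-reversed posterior chain. That chain has kernels $\propto Q(e',e)\alpha_t(e')\psi_t(e')$ with forward messages $\alpha_t$, and $Q$ is symmetric, so the same computation applies.

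I expect the main obstacle to be the careful justification that conditioning on the output and the helpers keeps the chain Markov with kernels of exactly this tilted form, including the stationary left boundary and the finite-window truncation. The one-step two-state calculation is elementary once that structure is established.
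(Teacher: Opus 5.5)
Your proof is correct and follows essentially the same route as the paper. You write the posterior as a two-state chain on the noise bits with tilted one-step kernels, and bound each kernel's row difference by $\rho$ through the same two-state computation; your weights $(w_0,w_1)$ play the role of the paper's $(1,s)$, and $w_0^2+w_1^2\geq 2w_0w_1$ replaces $s+s^{-1}\geq 2$. You then conclude with submultiplicativity, symmetry of $Q$ for the reverse direction, and invariance of total variation under $E_t\mapsto y_t\oplus E_t$. Your direct treatment of zero coordinates under perfect helpers replaces the paper's continuity argument at $s\in\{0,\infty\}$.
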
\begin{proof}

Given $Y$, write $X_t=Y_t\oplus E_t$. The posterior is a binary chain with edge factors given by the transition matrix $Q$ in \cref{eq:channel} and nonnegative local likelihoods. Absorb all likelihoods to the right of a transition into a vector proportional to $(1,s)$, $0\leq s\leq\infty$. The difference of the two transition rows, measured by their probabilities of state one, is

\begin{equation}\label{eq:twostate}
 \frac{(1-p)s}{p+(1-p)s}-\frac{ps}{1-p+ps}
 =\frac{\rho s}{[p+(1-p)s][1-p+ps]}\leq\rho.
\end{equation}

For $s>0$, the denominator divided by $s$ is $p(1-p)(s+s^{-1})+p^2+(1-p)^2\geq1$; the endpoints follow by continuity. The Dobrushin coefficient of a composition is bounded by the product of the one-step coefficients. Since $Q$ is symmetric, the same calculation applies backward. A perfect local helper only removes a state from the support. Conditioning on a source value cancels its own likelihood, and the endpoint transformations between $E$ and $X$ preserve total variation.

\end{proof}\begin{lemma}[Detector limits and interaction bound]\label{lem:regularity}

The entropy and detector limits in \cref{eq:limits} are uniform and weakly continuous, and the extrinsic map $e$ preserves the degradation order defined in \cref{sec:prelim}. More precisely,

\begin{equation}\label{eq:uniform}
 0\leq h_n(d)-h(d)\leq\frac{C}{n},\qquad
 \sup_d\operatorname{dist}_{\mathrm{BL}}(e_W^{\mathrm{win}}(d),e(d))\leq4\rho^W.
\end{equation}

For BMS laws $u,v,a,b$ with $a\preceq b$, the mixed derivative of the normalized entropy $h_n$ from \cref{eq:primitive} satisfies the following bound, with the summed-interaction constant $J_p$ from \cref{eq:rho}:

\begin{equation}\label{eq:hessian}
 |D^2h_n(d)[u-v,a-b]|\leq J_p\TV(u,v)[H(b)-H(a)].
\end{equation}

\end{lemma}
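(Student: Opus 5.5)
The plan is to derive all three assertions from the posterior contraction of \cref{lem:contraction}, together with the identity \cref{eq:variation} that writes derivatives of $h_n$ as entropies of extrinsic laws.

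\emph{Entropy limit.} Write $F_n$ by the chain rule over time as $\sum_t H(X_t\mid X^{t-1},Y^n,\Lambda^n)$. Given $X^{t-1}$ and $Y^{t-1}$, the only relevant past information is the noise bit $E_{t-1}$. Hence each summand is nonincreasing in $t$ and converges; this is the standard argument for stationary entropy rates. I will use the two-sided sandwich $H(\cdot\mid\text{infinite past})\leq$ summand $\leq$ first-order term, so that $h_n$ is a Cesàro average of a monotone sequence. This gives $h_n\downarrow h$. The first few terms differ from the limit only through the unknown initial state, which carries at most one bit of total excess uncertainty. To reach the sharper constant $C/n$, I will compare with $H(E^n)=1+(n-1)h_2(p)$: the excess is exactly the boundary bit minus its stationary contribution. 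Concretely, $nh_n-(n-1)h_{\text{rate}}$ is bounded by the one-bit start-up deficit $1-h_2(p)=C$.

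\emph{Window limit.} The window bound $4\rho^W$ comes from coupling. The center posterior computed with radius $W$ and the one computed with the full block differ only through the boundary messages. By \cref{lem:contraction}, the influence of each boundary on the center posterior is at most $\rho^W$ in total variation. That gives at most $2\rho^W$ per posterior over the two sides, and the passage to reliability laws under a bounded-Lipschitz test costs another factor $2$. Uniformity in $d$ holds because the contraction bound ignores the helpers. The sequence $e_W^{\mathrm{win}}$ is therefore Cauchy, and $e_n$ has the same limit once boundary coordinates are discarded.

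\emph{Continuity and order.} Weak continuity holds for each finite window and passes to the limit by uniform convergence. Degradation order is preserved because improving a helper law means revealing additional information, and the extrinsic experiment is monotone under this kind of physical degradation.

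\emph{Mixed derivative (main obstacle).} By \cref{eq:variation}, $D^2h_n(d)[u-v,a-b]$ equals $\frac1n\sum_i\sum_{j}\partial_{d_j}H(e_i\star(u-v))[a-b]$. Here the $j$-perturbation changes the extrinsic law at $i$, and $j\neq i$ because the root helper is omitted. I will represent $a\preceq b$ through a revealed degradation coupling and write the $j$-derivative as an entropy difference. The quantity $H(b)-H(a)$ is the information the improvement adds at $j$. That information reaches $X_i$ only through the posterior chain, so by strong data processing with \cref{lem:contraction} its effect on the law at $i$ is damped by $\rho^{|i-j|}$.

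The factor $\TV(u,v)$ enters because $H(e\star w)$ for a signed measure $w=u-v$ is Lipschitz in $e$, with constant proportional to $\TV(u,v)$. Summing $\rho^{|i-j|}$ over $j\neq i$ gives $J_p$. The delicate point is making this a linear, not square-root, bound in $H(b)-H(a)$. I would first try the Dobrushin coefficient applied directly to the coupled posterior of $X_i$ given the two helper versions at $j$, which gives a total variation change of at most $\rho^{|i-j|}$ times the change in the $j$-posterior. After averaging, that change is linear in the entropy gap via the degradation-coupling identity.
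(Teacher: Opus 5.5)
Your overall architecture is the paper's: diagonal terms vanish because the root helper is omitted, each pair $(i,j)$ is damped by $\rho^{|i-j|}$ through \cref{lem:contraction}, and summing gives $J_p$. The gap is in the mechanism you propose for linearity in $H(b)-H(a)$, and it is exactly the point you flag.

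Dobrushin contraction of the coupled posteriors followed by averaging cannot give a linear bound. The mean change of the $j$-posterior is controlled by the information gain only through Pinsker, that is, as a square root. Take all other helpers uninformative, so $X_j$ is fair given the rest. Improving its helper from $\Delta_0$ to a BSC of reliability $\delta$ moves the posterior by $\delta/2$ but adds only $1-h_2((1-\delta)/2)=\Theta(\delta^2)$ bits. Combined with a Lipschitz estimate for $e\mapsto H(e\star(u-v))$, this gives at best $\rho^{|i-j|}\TV(u,v)\sqrt{H(b)-H(a)}$. That Lipschitz estimate also fails near perfect information, since $h_2$ is not Lipschitz at $0$. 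A square-root bound is useless in \cref{thm:saturation}, where only $\sum_j\epsilon_{j,\alpha}\leq1$ is available.

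The missing idea is to stay at the level of mutual information throughout.
\begin{itemize}
\item Couple $X_j\to B_j^+\to B_j^-$ with laws $a\preceq b$. Let $A_i$ be the helper at $i$ and $V_0$ all other observations.
\item Expanding conditional entropy, the mixed difference is, up to sign, the change of $J(A_i)=I(A_i;B_j^+\mid V_0,B_j^-)$ as the law of $A_i$ varies.
\item Given $(V_0,B_j^-)$, the chain $B_j^+\to X_j\to X_i\to A_i$ holds. A binary-input channel with row TV at most $\eta$ contracts mutual information by the factor $\eta$: split off the input-independent part of mass $1-\eta$ and reveal the branch.
\item With $\eta=\rho^{|i-j|}$ and $I(X_j;B_j^+\mid V_0,B_j^-)\leq I(X_j;B_j^+\mid B_j^-)=H(b)-H(a)$, this gives $0\leq J(A_i)\leq\rho^{|i-j|}[H(b)-H(a)]$ for every law of $A_i$.
\item Write $u-v=\TV(u,v)(\mu-\nu)$ with probability laws $\mu,\nu$. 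By affinity in revealed mixtures, each term is $\TV(u,v)$ times a difference of two numbers in this interval, so no regularity of $e$ is needed.
\end{itemize}
Your first paragraph could alternatively be repaired by measuring the change of $e_i$ in entropy rather than posterior TV. For $e\preceq e'$ one has $|H((e'-e)\star(\mu-\nu))|\leq H(e')-H(e)$. Let $e_i^{(a)}$ denote the extrinsic law at $i$ when the helper at $j$ has law $a$. Order preservation and the same mutual-information contraction then give $H(e_i^{(b)})-H(e_i^{(a)})\leq\rho^{|i-j|}[H(b)-H(a)]$.

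The bound $0\leq h_n-h\leq C/n$ is also not established by your sketch. Your chain rule conditions on all of $(Y^n,\Lambda^n)$. For $t\geq2$ the $t$-th summand therefore equals $H(X_t\mid E_{t-1},Y_t^n,\Lambda_t^n)$, which depends on the remaining horizon $n-t$ and is nondecreasing, not nonincreasing, in $t$. The stationary entropy-rate argument does not apply as written. More importantly, nothing in the sketch produces the constant $C$ for general helpers. Your stated comparison $nh_n-(n-1)h\leq C$ already fails at $d=\Delta_0$, where the left side equals $1>C$.

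The paper's tool is a cut identity for adjacent blocks, with $O=(Y,\Lambda)$:
\[ F_m+F_n-F_{m+n}=I(E_L;E_R)-I(O_L;O_R)=C-I(O_L;O_R)\in[0,C]. \]
It uses iid inputs, local helpers and Markovity of $E$. The identity gives subadditivity, hence $h=\inf_nF_n/n\leq h_n$. It also gives $F_{kn}\geq kF_n-(k-1)C$, hence $h\geq h_n-C/n$.

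Your window estimate matches the paper's argument: one $\rho^W$ per boundary via \cref{lem:contraction}, then a factor $2$ for reliabilities. For finite-window continuity you still need that every output and helper-sign pattern has positive probability. That keeps the normalization bounded away from zero even with perfect helpers.
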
\begin{proof}

Write $O_L=(Y_L,\Lambda_L)$ and $O_R=(Y_R,\Lambda_R)$ for adjacent blocks of lengths $m,n$. The entropy chain rule gives
\begin{equation}\label{eq:cut}
 \begin{aligned}
 F_m+F_n-F_{m+n}
 &=I((X_L,O_L);(X_R,O_R))-I(O_L;O_R)\\
 &=I(E_L;E_R)-I(O_L;O_R)=C-I(O_L;O_R)\in[0,C].
 \end{aligned}
\end{equation}
The second equality uses $E=X\oplus Y$, iid inputs, and local independent helpers. Markovity gives $I(E_L;E_R)=I(E_m;E_{m+1})=C$; data processing gives $I(O_L;O_R)\leq C$. Subadditivity yields $h=\inf_n F_n/n$, while $F_{kn}\geq kF_n-(k-1)C$ yields $h\geq h_n-C/n$. This proves the entropy approximation in \cref{eq:uniform}; the values with absent or perfect helpers follow from the block-information formula in \cref{eq:capacity}.

On a common bilateral experiment, local and full root posteriors differ only through the two boundary conditions. Changing one boundary changes the root probability by at most $\rho^W$, by the posterior contraction in \cref{lem:contraction}; changing both costs at most $2\rho^W$. The canonical reliability $|2\Pr(X_0=0\mid\cdot)-1|$ changes by at most twice this amount. Bounded martingale convergence defines the full posterior and proves the window-detector approximation in \cref{eq:uniform}. A root at distance $W$ from a finite block's ends obeys the same bound; exceptional roots have fraction at most $2W/n$. Hence

\begin{equation}\label{eq:blocklimit}
 \sup_d\operatorname{dist}_{\mathrm{BL}}(e_n(d),e(d))
 \leq8\rho^W+4W/n.
\end{equation}

For a fixed window, condition on the helper crossover probabilities in $[0,1/2]$. Every pattern of physical outputs and helper signs has positive probability, even with perfect helpers: some input string satisfies the helper signs, and every noise string has positive probability. The normalization is continuous and bounded away from zero on this compact parameter set. Thus posterior reliabilities and their bounded continuous test integrals depend continuously on the helper laws. The same holds for entropy, since $h_2$ is continuous including its endpoints. Uniform limits preserve continuity. Garbling the helpers proves degradation preservation.

For the derivative bound, retain different helpers at coordinates $i\ne j$. Let $A_i$ be the helper at $i$ and let $V_0=(Y^n,\Lambda_{-\{i,j\}})$ collect all observations except the two helpers. Couple the more informative helper $B_j^+$, of law $d_j^+$, to its degraded version $B_j^-$, of law $d_j^-$, as $X_j\to B_j^+\to B_j^-$. Expanding conditional entropy shows that its mixed difference is the change, under the helper at $i$, of

\begin{equation}\label{eq:interaction}
 J(A_i)=I(A_i;B_j^+\mid V_0,B_j^-),\qquad
 0\leq J(A_i)\leq\rho^{|i-j|}[H(d_j^-)-H(d_j^+)].
\end{equation}

To prove the bound, condition on $(V_0,B_j^-)$ and use the chain $B_j^+\to X_j\to X_i\to A_i$. A binary-input channel whose rows have TV distance at most $\eta$ contracts mutual information by at most $\eta$: split the rows into an input-independent common part of mass $1-\eta$ and a residual part, and reveal the branch. Apply the posterior contraction of \cref{lem:contraction} with $\eta=\rho^{|i-j|}$. After averaging over the conditioning observations, conditional independence of $B_j^+$ and $V_0$ given $(X_j,B_j^-)$ gives $I(X_j;B_j^+\mid V_0,B_j^-)=I(X_j;B_j^+\mid B_j^-)-I(B_j^+;V_0\mid B_j^-)$. Dropping the nonnegative last term leaves $H(d_j^-)-H(d_j^+)$, the information gained by improving the helper at $j$. This proves the two-position interaction bound in \cref{eq:interaction}.

A canonical signed difference $u-v$ is $\TV(u,v)$ times a difference of probability measures. Affinity in revealed helper mixtures and the two-position interaction bound in \cref{eq:interaction} bound the mixed difference by $\rho^{|i-j|}\TV(u,v)[H(b)-H(a)]$. Diagonal second derivatives vanish. Sum over $i\ne j$, divide by $n$, and use $\sum_{i\ne j}\rho^{|i-j|}\leq J_p$ for each $j$. This proves the mixed-derivative bound in \cref{eq:hessian}: one direction is measured in total variation, and the ordered direction in entropy gain.

\end{proof}

\section{Threshold saturation}\label{sec:coupling}

We combine the positive potential gap of \cref{thm:gap} with the detector mixed-derivative bound in \cref{eq:hessian}. Let $L$ be the number of active variable sections and $w$ the coupling width. Using the local update maps $\mathsf g,\mathsf f$ of \cref{eq:maps}, write the smoothed recursion in check-input coordinates:

\begin{equation}\label{eq:coupled}
 y_i=\frac1w\sum_{k=0}^{w-1}\mathsf g(x_{i+k}),\qquad
 x'_j=\frac1w\sum_{k=0}^{w-1}o_{j-k},\qquad
 o_i=\begin{cases}\mathsf f(y_i),&1\leq i\leq L,\\\perfect,&\text{otherwise}.\end{cases}
\end{equation}

Here $x_j$ is the pair of laws entering checks in section $j$, $y_i$ averages check-output laws reaching variable section $i$, and $o_i$ is its outgoing pair before smoothing. The perfect pair $\perfect$ from \cref{def:fixedpoints} implements the shortened boundary.

\begin{theorem}[Threshold saturation]\label{thm:saturation}

Suppose the channel capacity satisfies $C>R$. Use the positive gap $\Delta$ from \cref{eq:gap} and summed-interaction bound $J_p$ from \cref{eq:rho}. For every finite chain length $L$, the coupled density evolution in \cref{eq:coupled} converges to perfect information from uninformative initialization whenever

\begin{equation}\label{eq:width}
 w>\frac{K_p}{2\Delta},\qquad
 K_p=(r+g-1)^2[r(\ell-1)+g(g-1)+g^2J_p]
     =25[3(\ell-1)+6+9J_p].
\end{equation}

\end{theorem}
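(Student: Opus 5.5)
The plan is to run the standard potential-function threshold-saturation argument for spatially coupled systems, in the form of Kumar et al.~\cite{Kumar2014} and its BMS MN version~\cite[Sec.~7]{Kasai2026}, with one extra term for the detector.

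\textbf{Monotonicity and the limit profile.} Start the recursion \cref{eq:coupled} from the uninformative profile $x_j=(\Delta_0,\Delta_0)$. Both $\mathsf g$ and $\mathsf f$ preserve degradation. For $\mathsf f$ this uses the order preservation of $e$ in \cref{lem:regularity}. Hence the iterates are monotone in $\preceq$, and by compactness and weak continuity they converge to a fixed profile $x^\ast$ of the coupled recursion. Argue by contradiction: suppose $x^\ast$ is not perfect everywhere. After the shortened boundary has been imposed, the profile is ordered, so one may choose the worst section and work with the single-sided (non-decreasing) profile used in the BMS proof.

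\textbf{Coupled potential and shift.} Define the coupled potential $U_w(x)=\sum_i\mathcal Q(x_i)-\sum_i\mathcal P(y_i)$ with $y_i$ given by \cref{eq:coupled}. Its directional derivative at a point $x$ in direction $\delta x$ equals a sum of entropy pairings between $\mathsf T$-residuals and $\delta x$. This follows from the duality \cref{eq:duality} and, on the detector side, from the variation identity \cref{eq:variation}, with $Dh$ produced by the extrinsic law $e$. Let $Sx^\ast$ be the profile shifted by one section. Compare $U_w(Sx^\ast)-U_w(x^\ast)$ in two ways.

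First, telescoping over the boundary shows this difference equals, up to sign, the uncoupled potential at the fixed point $x^\ast_\infty$ reached in the flat interior. If that point is not perfect, \cref{thm:gap} gives magnitude at least $\Delta$. Second, apply Taylor's formula with integral remainder along the segment $x^\ast+t(Sx^\ast-x^\ast)$. The first-order term vanishes because $x^\ast$ is a fixed point. The second-order term splits into a check part, a variable part, and a detector part. The check and variable parts are bounded exactly as in the BMS proof: each of the $r+g-1$ check-side convolutions contributes one factor, giving $(r+g-1)^2[r(\ell-1)+g(g-1)]$ times $\|Sx^\ast-x^\ast\|$ divided by $w$. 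The $1/w$ comes from smoothing, since consecutive averaged laws differ by at most $1/w$ in total variation of a telescoping sum.

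\textbf{Detector part (main obstacle).} The new term is $D^2h$ evaluated in directions of the form $u-v$ (total-variation size $O(1/w)$) and $a-b$ (an ordered difference, since the profile is monotone). Here I would invoke the mixed-derivative bound \cref{eq:hessian}, with the $g^{\star}$ powers contributing the factor $g^2$. Summing the ordered entropy differences $H(b)-H(a)$ along the chain telescopes to at most $1$ per type. This yields the $g^2J_p$ summand. The delicate part is to keep one direction in total variation and the other monotone after the chain rule through $y_2^{\star g}$. This works because convolution with a fixed law preserves both order and TV-contraction. I also have to pass from the $n$-block $h_n$ to $h$, using the uniform limits \cref{eq:uniform} and letting $n\to\infty$ before bounding.

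Combining the two comparisons gives $\Delta\le K_p/(2w)$, which contradicts \cref{eq:width}. Therefore $x^\ast=\perfect$ in every section, which is the claimed convergence.
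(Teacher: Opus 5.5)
Your outline has the paper's architecture: a one-sided profile, a shift, telescoping, and a $1/w$ remainder whose detector term is controlled by \cref{eq:hessian}. However, the step that is supposed to produce $\Delta$ fails as stated.

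Telescoping the shift gives $-U(x_*)-\mathcal P(y_0)$, where $x_*=x_J$ is the endpoint of the one-sided profile, i.e.\ the value frozen in the constant right extension. For finite $L$ there is no flat interior, and $x_*$ is in general \emph{not} a fixed point of the uncoupled map. The averages reaching section $J$ also involve the more informative sections to its left, so the fixed-profile equation and monotonicity of $\mathsf g,\mathsf f$ give only $x_*\preceq\mathsf T(x_*)$. Since \cref{thm:gap} bounds $U$ only at fixed points, it cannot be applied to $x_*$.

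Relatedly, the terminated profile is not monotone, and degradation is only a partial order, so there is no ``worst section.'' The one-sided system must be built as a separate dominating recursion, and that construction is exactly what makes its endpoint a sub-fixed point.

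The missing idea is ordered descent. If $x\preceq\mathsf T(x)$, then $z_s=(1-s)x+s\mathsf T(x)$ satisfies $z_s\preceq\mathsf T(x)\preceq\mathsf T(z_s)$. Integrate the residual identity \cref{eq:residualscalar} for $U_n$ and pass to the limit with \cref{eq:uniform}; the residual and derivative leaves then have opposite orientations, and the signed-entropy inequality gives $U(\mathsf T(x))\le U(x)$. Iterating from $x_*$ produces a monotone sequence converging weakly to a fixed point degraded relative to $x_*\neq\perfect$. That fixed point is therefore nonperfect, and so $U(x_*)\ge\Delta$. A gap defined over all points outside the basin of attraction would avoid this step, but that is not what \cref{thm:gap} provides.

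Two points in your remainder estimate also need repair.

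First, \cref{eq:variation} and \cref{eq:hessian} hold only for $h_n$; the limit $h$ is merely continuous, so the limiting potential cannot be differentiated twice. The shift computation must therefore use $\mathcal U_n$ and $\mathsf T_{n,*}$, for which the profile is not a fixed point. The first-order term then does not vanish; it is an error $\varepsilon_n$ that tends to zero uniformly along the path, by \cref{eq:blocklimit}, continuity and compactness, with no rate. Consequently $n\to\infty$ must be taken after integrating in $s$, not before bounding as you propose.

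Second, a genuine second-order Taylor remainder contains more than the $D\mathsf T$ term. It also contains the pairing of the shift direction $v$ with $G_{j,\alpha}$, which must be signed rather than bounded, and a term pairing the residual at $z_s$ with $D^2\mathsf g_\alpha(z_{s,j})[v_j,v_j]$. The paper avoids the latter by differentiating only once: it writes $\frac{\dd}{\dd s}\mathcal U_n(z_s)$ through the residual at $z_s$, as in \cref{eq:profileresidual}, and expands only that residual. The $sv$ piece is then nonnegative by the same-order signed-entropy inequality, and only the $D\mathsf T$ integral needs bounding. This is what yields exactly $-K_ps/w$ and hence the stated width condition. Bounding a full Hessian in absolute value would still give saturation for large $w$, but not with the constant $K_p$.
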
\begin{proof}

We adapt the BMS MN shift proof~\cite[Lemmas~7.1--7.2 and Theorem~7.3]{Kasai2026}, recording the detector term and the limiting step. Replace the limiting entropy $h$ and extrinsic map $e$ from \cref{eq:limits} by the finite-block quantities $h_n,e_n$ of \cref{def:helpers}. This defines a finite-block potential $U_n$ and update $\mathsf T_n$ from \cref{eq:potential,eq:maps}. These finite-block functionals are polynomials along labelled mixture paths. The entropy variation formula \eqref{eq:variation} and entropy duality \eqref{eq:duality} give

\begin{equation}\label{eq:residualscalar}
 DU_n(x)[v]=\sum_{\alpha=1}^2\omega_\alpha
 H\bigl((x_\alpha-\mathsf T_{n,\alpha}(x))\star D\mathsf g_\alpha(x)[v]\bigr),
 \qquad(\omega_1,\omega_2)=(r,g).
\end{equation}

In the residual identity \cref{eq:residualscalar}, $x=(a,b)$ is a message-law pair, $v$ a signed perturbation, and $\alpha$ the variable type. Integrate along a fixed mixture path, then use the uniform limits of \cref{lem:regularity} and continuity to replace $U_n,\mathsf T_n$ by $U,\mathsf T$. If $x\preceq\mathsf T(x)$, the path $z_s=(1-s)x+s\mathsf T(x)$ satisfies $z_s\preceq\mathsf T(x)\preceq\mathsf T(z_s)$. The residual and derivative leaves (the differentiated convolution factors) therefore have opposite order orientations. The signed-entropy inequality~\cite[Prop.~8(iii)]{Kumar2014} gives $U(\mathsf T(x))\leq U(x)$; the reverse ordering gives the same conclusion. Ordered BMS sequences converge weakly and retain their order at the limit~\cite[Prop.~11(v)--(vi)]{Kumar2014}. Update continuity makes the limit a fixed point. Starting from $x_*\ne\perfect$ with $x_*\preceq\mathsf T(x_*)$, that limit remains above $x_*$ in degradation order and is nonperfect. Descent and the fixed-point gap of \cref{thm:gap} give $U(x_*)\geq\Delta$.

Use a one-sided comparison on $J=L+w-1$ updated check-input pairs: set $x_j=\perfect$ for $j\leq0$, extend constantly by $x_J$ for $j>J$, and use $o_i=\mathsf f(y_i)$ for every $i\geq1$. Start the updated pairs at $(\Delta_0,\Delta_0)$ and reset the right extension after each update. Monotonicity gives a limiting profile, nondecreasing in degradation with spatial position, that dominates the terminated system. If it is nonperfect, its endpoint $x_*=x_J$ satisfies $x_*\preceq\mathsf T(x_*)$. Hold this exterior fixed when varying the profile and put

\begin{equation}\label{eq:profilepotential}
 \mathcal U(z)=\sum_{j=1}^J\mathcal Q(z_j)
 -\sum_{i=1}^J\mathcal P\!\left(\frac1w\sum_{k=0}^{w-1}\mathsf g(z_{i+k})\right).
\end{equation}

Shift the updated coordinates by $(\mathsf Sx)_j=x_{j-1}$, leaving the exterior fixed. The check and variable contributions in the profile potential \cref{eq:profilepotential} telescope. The endpoint potential is at least $\Delta$ by the preceding gap argument, and $\mathcal P(y_0)\geq0$ because it is a sum of conditional entropies. Thus the shift forces a potential decrease of at least $\Delta$:

\begin{equation}\label{eq:upperdrop}
 \mathcal U(\mathsf Sx)-\mathcal U(x)=-U(x_*)-\mathcal P(y_0)\leq-\Delta.
\end{equation}

To bound the same decrease from above, let $v=\mathsf Sx-x$ be the shift direction, $z_s=x+sv$ the interpolation for $0\leq s\leq1$, and $G_{j,\alpha}(s)=D\mathsf g_\alpha(z_{s,j})[v_j]$ the resulting check-output variation. Neighboring fixed-point averages give

\begin{equation}\label{eq:increments}
 v_j=\frac{o_{j-w}-o_j}{w},\qquad
 \epsilon_{j,\alpha}:=H(x_{j,\alpha})-H(x_{j-1,\alpha})\geq0,
 \qquad\sum_j\epsilon_{j,\alpha}\leq1.
\end{equation}

Linearity of signed entropy also gives $\epsilon_{j,\alpha}=[H(o_{j,\alpha})-H(o_{j-w,\alpha})]/w$. Thus a leaf $(o_{j-w,\alpha}-o_{j,\alpha})/w$ has entropy-change magnitude exactly $\epsilon_{j,\alpha}$ before further convolutions, which can only reduce its entropy change. Each component of $v_j$ has TV at most $1/w$. Use the finite-block entropy $h_n$ in the profile potential \cref{eq:profilepotential}, and denote the corresponding fixed-exterior update by $\mathsf T_{n,*}$. For a varied coordinate $j$, the contributing variable indices are $i\in[j-w+1,j]\cap[1,J]$. Omitted left-boundary terms have law $\perfect$ and zero signed entropy; the fixed right exterior has zero variation. The exact residual identity is

\begin{align}\label{eq:profileresidual}
 D\mathcal U_n(z_s)[v]
 &=\sum_{j,\alpha}\omega_\alpha H\bigl((z_{s,j,\alpha}-\mathsf T_{n,*,j,\alpha}(z_s))\star G_{j,\alpha}(s)\bigr),\nonumber\\
 z_s-\mathsf T_{n,*}(z_s)
 &=x-\mathsf T_{n,*}(x)+sv-\int_0^sD\mathsf T_{n,*}(z_\sigma)[v]\,\dd\sigma.
\end{align}

The contribution of $x-\mathsf T_{n,*}(x)$ is bounded in absolute value by $\varepsilon_n\to0$, uniformly in $s$, for this fixed finite profile. Indeed, the block-detector approximation in \cref{eq:blocklimit} gives weak convergence of the residual laws, while convolutions and entropy are jointly continuous. The probability laws in the finitely many signed leaf terms range over a compact set as $s\in[0,1]$, so convergence is uniform in $s$. No rate is inferred: the entropy kernel is continuous but not Lipschitz at perfect information. The $sv$ contribution is nonnegative by the same-order signed-entropy inequality. It remains to bound the integral.

A convolution tree changes entropy by at most $H(b)-H(a)$ when one leaf changes from $a$ to $b\succeq a$. With a second arbitrary signed probability-law difference, the absolute mixed entropy is bounded by the same quantity: fix the second leaf to either law and subtract two numbers in that interval. This follows by iterating the entropy duality in \cref{eq:duality} and degradation monotonicity. Convolutions are also nonexpansive in canonical TV.

Put $d_0=r+g-1$. Each check derivative has $d_0$ leaf terms. Holding the detector law fixed, the variable derivatives have $\ell-1$ and $g-1$ factors. Use $1/w$ at a derivative leaf and sum the ordered entropy changes at the paired leaf in the check-output variation $G_{j,\alpha}$ using the entropy-increment sum in \cref{eq:increments}. The direct contribution is at most $d_0^2[r(\ell-1)+g(g-1)]/w$, as in the BMS proof.

The additional detector term, with $y=y_{i,2}$ and $d=y^{\star g}$, is paired with the check-output variation $G_{j,2}(s)$. Here $y$ and its variation $\dot y_{i,2}$ are evaluated at $z_\sigma$, while $G_{j,2}(s)$ is evaluated at $z_s$. Differentiating the entropy variation formula \cref{eq:variation} once more gives

\begin{equation}\label{eq:detectorpair}
 \begin{aligned}
 &H\bigl(De_n(d)[g y^{\star(g-1)}\star\dot y_{i,2}]
       \star y^{\star(g-1)}\star G_{j,2}(s)\bigr)\\
 &\qquad=D^2h_n(d)[g y^{\star(g-1)}\star\dot y_{i,2},\ y^{\star(g-1)}\star G_{j,2}(s)].
 \end{aligned}
\end{equation}

Each first-direction leaf is a convolution of probability laws with $(o_{k-w,\alpha}-o_{k,\alpha})/w$; it has zero mass and is $1/w$ times a difference of probability laws. Their total weighted count after averaging is $g d_0$, giving total TV at most $g d_0/w$. Each second-direction leaf is a probability convolution of an informative-minus-degraded difference divided by $w$. Degradation preservation and the preceding entropy identity bound its entropy change by $\epsilon_{j,\alpha}$. Hence the mixed-derivative bound in \cref{eq:hessian} applies term by term by bilinearity, with the first difference arbitrary and the second ordered. Summing these entropy changes over $j$ and the $d_0$ leaves contributes a factor of at most $d_0$. The detector bound \eqref{eq:hessian}, followed by the weight $\omega_2=g$, bounds this term by $J_pg^2d_0^2/w$. All averaging weights are nonnegative and sum to at most one; the bound allows different path parameters $s,\sigma$. Combining these estimates, integrating in $s$, and then taking $n\to\infty$ gives

\begin{equation}\label{eq:lowerdrop}
 \frac{\dd}{\dd s}\mathcal U_n(z_s)\geq-\varepsilon_n-\frac{K_p s}{w},
 \qquad \mathcal U(\mathsf Sx)-\mathcal U(x)\geq-\frac{K_p}{2w}.
\end{equation}

The gap-based decrease in \cref{eq:upperdrop} is at least $\Delta$, whereas the derivative-based estimate in \cref{eq:lowerdrop} allows a decrease of at most $K_p/(2w)$. The sufficient width condition in \cref{eq:width} makes the latter strictly smaller than the former, a contradiction. The comparison limit, and therefore the terminated limit, is perfect.

\end{proof}

\section{Finite codes and actual rate}\label{sec:codes}

The threshold-saturation result of \cref{thm:saturation} concerns density evolution. To obtain the finite codes in \cref{thm:main}, we must approximate that recursion by finite decoding neighborhoods and control the dimension after puncturing. Use the balanced ensemble~\cite[Sec.~3.1]{Kasai2026}. An active section contains $M$ transmitted variables of degree $g$ and $rM/\ell$ punctured variables of degree $\ell$. Each section has $M$ checks, each with $r$ punctured and $g$ transmitted sockets. Independently and uniformly partition the sockets of each type in each variable section and each check section into $w$ equal offset groups, then uniformly match corresponding groups. Offset $k$ connects variable section $i$ to check section $i+k$, $0\leq k<w$. Choose $M$ so $rM/\ell,rM/w,gM/w$ are integral. Thus finitely many exposed offsets become independent and uniform as $M\to\infty$; offsets are not prescribed separately at each node. Shorten both variable types outside sections $1,\ldots,L$; at most $(L+w-1)M$ nontrivial checks remain.

Transmit consecutive sections, with independent uniform permutations within each section, through one continuous noise trajectory. The permutation is part of the code realization. The decoder first computes check-to-variable SPA messages, uses all $g$ incoming check messages as each transmitted bit's detector helper, and omits the entire root helper in window BCJR. Outgoing transmitted messages combine the extrinsic result with the other $g-1$ check messages; outgoing punctured messages combine the other $\ell-1$ messages. Decisions use all local evidence, with independent fair tie-breaking. Each window uses only its own physical outputs and the helpers at its nonroot positions; it ignores all outputs outside the window. At physical ends truncate the window, using a stationary left boundary and an all-one right boundary. Windows crossing an internal section boundary use the actual helpers on both sides. Away from section boundaries, the limiting density evolution (DE) at fixed window radius $W$ is the coupled recursion in \cref{eq:coupled}, with the limiting extrinsic map $e$ replaced by the window detector $e_W^{\mathrm{win}}$ of \cref{def:helpers}; letting $W\to\infty$ gives the displayed recursion and bit-posterior laws $y_{i,1}^{\star\ell}$ and $e(y_{i,2}^{\star g})\star y_{i,2}^{\star g}$.

For any fixed extended codeword $(u,v)$, where $u$ is the punctured part and $v$ the transmitted part, replace the physical output by $Y\oplus v$ and reverse the sign of each log-likelihood ratio (LLR) according to its variable bit. Linear checks and variable sums respect this transformation. In BCJR, the transformed prior satisfies $\nu'_k(z)=\nu_k(z\oplus v_k)$, so $\nu'_k(Y_k\oplus v_k\oplus e)=\nu_k(Y_k\oplus e)$: the noise-state likelihood and boundary laws are unchanged. Omitting the root helper preserves this identity. Coupling each fair tie decision with its bit-flipped decision proves equality of bit-error laws with the all-zero experiment.

\begin{lemma}[Finite-code density-evolution limit]\label{lem:finite}

Fix the chain length $L$, coupling width $w$, window radius $W$, and finite iteration count $t$ of the preceding construction. Let $P^{(t,W)}(L,w,M)$ denote its average bit-error probability for either variable type, and let $P^{(t)}_{\mathrm{DE}}(L,w)$ be the corresponding prediction of the infinite-window recursion \cref{eq:coupled}. Then ensemble-average errors converge to finite-window density evolution as $M\to\infty$. For either variable type,

\begin{equation}\label{eq:delimits}
 \lim_{W\to\infty}\lim_{M\to\infty}\E P^{(t,W)}(L,w,M)
 =P^{(t)}_{\mathrm{DE}}(L,w).
\end{equation}

\end{lemma}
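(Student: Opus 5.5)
We prove the two limits in \cref{eq:delimits} in the stated order: first $M\to\infty$ at fixed $(L,w,t,W)$, then $W\to\infty$. The inner limit is a standard local-tree-likeness argument adapted to a detector whose neighborhood also extends along the channel trellis. The outer limit uses the uniform window approximation of \cref{lem:regularity}.

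\textbf{Step 1: reduction to the all-zero codeword and to a computation neighbourhood.} By the symmetry argument preceding the lemma, $\E P^{(t,W)}(L,w,M)$ equals the all-zero-codeword error with fair tie-breaking. For a uniformly chosen variable of the given type in a given section, the decision after $t$ iterations is a deterministic function of a finite neighbourhood, plus the tie coin. This neighbourhood alternates SPA depth-one steps along graph edges with detector steps. A detector step at a transmitted bit reaches the $2W$ physically adjacent transmitted bits and, through their helpers, their own check neighbourhoods. Its size is bounded independently of $M$ by a constant $K(t,W,\ell,g,r)$.

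\textbf{Step 2: tree-likeness and independence as $M\to\infty$.} We explore the neighbourhood breadth first. Each graph edge is revealed through the uniform group matching, and each physical neighbour through the uniform within-section permutation. At every revelation, at most $K$ sockets or positions have already been exposed, out of $\Theta(M/w)$ available. The probability of hitting a previously exposed object, or of meeting an internal section boundary, is therefore $O(K^2w/M)$ for most roots. The chance that a window straddles a section boundary is at most $2W/M$ per root. Off this bad event, the neighbourhood is a tree in which the following hold:
\begin{itemize}
\item offsets are iid uniform on $\{0,\dots,w-1\}$;
\item the nonroot window positions are distinct variables with independent subtrees;
\item the physical outputs in the window are a stationary stretch of the Markov noise, independent of all helper subtrees, because under the all-zero codeword the helpers depend only on channel outputs elsewhere.
\end{itemize}
There is one subtlety: different windows in the tree overlap the same noise trajectory only if they share physical positions, which again lies in the bad event. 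Distinct windows at separated positions use disjoint output stretches, but these are not independent, since the noise is Markov. Here we take the section permutations into account. Distinct windows in the tree lie, with probability $1-O(K^2/M)$, at physical distance at least $M^{1/2}$. Their outputs are then within total variation $O(\rho^{M^{1/2}})$ of independent copies, by geometric mixing of $Q$ (the ergodicity behind \cref{lem:contraction}). This mixing step is the main obstacle, and the choice of separation scale is made so that both the collision term and the mixing term vanish.

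On the good event, the message laws are exactly those of the finite-window coupled recursion with $e$ replaced by $e_W^{\mathrm{win}}$. By induction over the $t$ iterations, the joint law of root evidence converges in total variation to the finite-window DE law. The error functional is bounded, with ties counted as $1/2$, so the inner limit equals $P^{(t,W)}_{\mathrm{DE}}(L,w)$.

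\textbf{Step 3: $W\to\infty$.} By \cref{eq:uniform}, $\sup_d\operatorname{dist}_{\mathrm{BL}}(e_W^{\mathrm{win}}(d),e(d))\leq4\rho^W$. Convolutions, averaging and the check maps are weakly continuous, so induction over the finitely many iterations and sections shows that every finite-window DE law converges weakly to its infinite-window counterpart in \cref{eq:coupled}. The final step is to pass the error probability through the limit. The error is $\E$ of a function of the reliability that is discontinuous only at reliability $0$, and fair tie-breaking makes it $\tfrac12\Pr(V=0)+\E[\varepsilon\,;V>0]$ in canonical form. In canonical form this equals the Lipschitz functional $\E\varepsilon$ of the reliability law, namely $\E(1-V)/2$, so it is weakly continuous. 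This gives \cref{eq:delimits}.
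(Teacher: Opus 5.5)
Your proposal is correct and follows essentially the same route as the paper's proof. Both bound the computation neighbourhood independently of $M$, put collisions, section-boundary windows and window centres closer than $\sqrt M$ into a vanishing bad event, couple the separated windows to independent stationary copies by Markov mixing, read off exact finite-window DE on the good event, and then pass $W\to\infty$ via \cref{eq:uniform} with the bit error written as $\E[(1-V)/2]$.

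Two small points need fixing, neither of which affects the conclusion:
\begin{itemize}
\item The probability that two of the $K$ windows have centres within $M^{1/2}$ of each other is $O(K^2M^{-1/2})$, matching the paper's $O(B^2(q_M+W)/M)$, not $O(K^2/M)$ as you wrote. It still vanishes.
\item The induction in Step~3 also uses the weak continuity of $e$ from \cref{lem:regularity}, not only the uniform window bound.
\end{itemize}
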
\begin{proof}

Unfold the root decision through the $t$ parallel iterations. A check message exposes its other $r+g-1$ sockets. A punctured-variable message exposes its other $\ell-1$ checks. A transmitted-variable message exposes its detector window and its other $g-1$ checks; the window exposes at most $2W$ nonroot helper products, each with $g$ check messages from the preceding stages. Continue recursively to the uninformative initialization. Count physical output positions as leaves and retain the dependence inside each window. A dependency path crosses at most four stages per iteration: a variable combination, a detector window, a helper product, and a check update leading to the previous iteration. The final decision and initialization need at most four additional stages. A crude bound is $B=\sum_{k=0}^{4t+4}\kappa^k$, where $\kappa=1+\ell+r+2g+(2W+1)(g+1)$ bounds the fan-in of every stage. Thus $B$ is independent of $M$.

Expose this geometry without using observations. By sampling sockets and physical positions without replacement~\cite{RichardsonUrbanke2001}, the probability of an unintended repeated variable or socket, two distinct windows with centers within $q_M=\lfloor\sqrt M\rfloor$, or a window meeting a section boundary is $O(B^2(q_M+W)/M)$. Uses of a variable within its own detector factor are intentional, not collisions. Section-boundary roots contribute $O(BW/M)$ to the average; their actual decoder may cross the boundary, but no homogeneous-section approximation is asserted there.

For every remaining geometry, order the disjoint windows in physical time. For the noise transition matrix $Q$ of \cref{eq:channel}, stationary distribution $\pi=(1/2,1/2)$, and decay factor $\rho$ of \cref{eq:rho}, the Markov bound $\TV(Q^k(s,\cdot),\pi)\leq\rho^k$ couples their joint noise law to independent stationary window laws with total TV cost $O(B\rho^{q_M-2W})$, uniformly in previously exposed endpoint states. Both approximation errors vanish as $M\to\infty$ with $W,t$ fixed. The decoder symmetry proved above reduces any extended codeword to zero; the disjoint helper branches then yield conditionally independent BMS experiments with exactly the finite-window DE laws. The excluded root helper and destination check prevent immediate feedback. This establishes convergence of ensemble-average errors, without a concentration claim. Then use the window approximation in \cref{eq:uniform} and continuity for the fixed number of updates. For the canonical reliability $V$ defined in \cref{sec:prelim}, bit error is the continuous functional $\E[(1-V)/2]$. This gives the successive limits in \cref{eq:delimits}; the section-size limit precedes the window-radius limit.

\end{proof}

The message-neighborhood method for joint detection and decoding goes back to the ISI analysis~\cite[Sec.~III]{KavcicMaMitzenmacher2003}. Here the physical noise in separated windows remains correlated, so the additional Markov coupling estimate is needed. The present existence argument uses average-error convergence rather than an exponential concentration bound.

It remains to verify actual rate rather than just counting variables and checks. Write the extended parity-check matrix as $[A\ B]$, with $A$ acting on the punctured bits, and let $k_{\mathrm{ext}},k_{\mathrm{tx}}$ be the extended and transmitted-projection dimensions. For $n=LM$ and $n_u=Rn$, linear algebra gives

\begin{equation}\label{eq:rank}
 k_{\mathrm{tx}}=k_{\mathrm{ext}}-\dim\ker A,\qquad
 k_{\mathrm{ext}}\geq n\left(R-\frac{w-1}{L}\right).
\end{equation}

The kernel term in the projection-dimension formula \cref{eq:rank} measures punctured degrees of freedom invisible in transmission. To bound it without assuming independent parity checks, test the same graph on an auxiliary BEC of capacity $\alpha>R$. Let $(U,V)$ be uniform on the extended code, with punctured vector $U$ and transmitted vector $V$, and let $Y_{\mathrm{BEC}}$ be the BEC observation of $V$. These codeword vectors are local to the rate argument; they are distinct from the potential $U$ and scalar reliability $V$ used earlier. If its average punctured- and transmitted-bit errors under the uniform extended code are $e_u,e_v$, binary Fano and the BEC capacity bound give

\begin{equation}\label{eq:rate}
 \begin{aligned}
 \dim\ker A&=H(U\mid V)\leq H(U\mid Y_{\mathrm{BEC}})\leq n_u h_2(e_u),\\
 R-\frac{w-1}{L}-R h_2(e_u)&\leq\frac{k_{\mathrm{tx}}}{n}\leq\alpha+h_2(e_v).
 \end{aligned}
\end{equation}

Here $V$ is uniform on the transmitted projection, and the upper rate bound follows from $H(V)\leq n\alpha+H(V\mid Y_{\mathrm{BEC}})$. No rank-independence claim is used.

\begin{proof}[Proof of the finite-code capacity criterion, \cref{thm:main}]

Take auxiliary BEC capacities $\alpha_j\downarrow R$, approaching the design rate. Choose coupling widths $w_j$ large enough for both the Markov-channel saturation theorem, \cref{thm:saturation}, and BMS saturation on that BEC~\cite[Corollary~7.4]{Kasai2026}; next choose chain lengths $L_j$ with $w_j/L_j\to0$ to remove the termination-rate loss. Choose finite iteration counts $t_j$ making both variable-type DE errors small on both channels. Choose window radii $W_j$ using the detector approximation in \cref{eq:uniform}, and section sizes $M_j$ using the finite-code DE limit of \cref{lem:finite} and the memoryless computation-tree limit. The expected sum of the four nonnegative errors tends to zero, so some graph and physical permutation have sum no larger than that expectation. Select such a realization at each $j$. The lower and upper rate bounds in \cref{eq:rate} then both tend to $R$.

Choose an information set in the transmitted projection and a linear section into the extended code to obtain a deterministic encoder. Additive symmetry and symmetric tie-breaking make decoder bit errors identical for every fixed extended codeword, so restricting to this section preserves them. The information-bit error is at most $n/k_{\mathrm{tx}}$ times the transmitted-bit error and therefore vanishes. 

\end{proof}

A two-state BCJR pass for one radius-$W$ root window costs $O(W)$; a direct implementation of $t$ iterations costs $O(tn(W+\ell))$. The theorem chooses finite parameters for each target accuracy and then increases section size. It does not give a useful finite-length or iteration bound.

\section{Conclusion and extensions}\label{sec:extensions}

The potential decomposition in \cref{eq:reduction} and the positive-gap theorem, \cref{thm:gap}, transfer BMS positivity to Markov noise. The posterior contraction in \cref{lem:contraction} supplies the detector bound needed for the saturation theorem, \cref{thm:saturation}. For any finite binary additive-noise block, independent BMS helpers still give BMS extrinsic experiments and $\corr_n\geq\max\{0,C_n-C(e_n(d))\}$ by the entropy-correction argument in \cref{lem:correction}. Extending the theorem requires entropy and detector limits, a uniform summable interaction bound, and a computation-tree approximation.

For binary Markov noise with a strictly positive transition matrix $Q$, the posterior-transition row calculation in \cref{eq:twostate}, maximized over likelihood ratios, gives $\eta(Q)=|\sqrt{Q_{00}Q_{11}}-\sqrt{Q_{01}Q_{10}}|/(\sqrt{Q_{00}Q_{11}}+\sqrt{Q_{01}Q_{10}})<1$. A stationary two-state chain is reversible, so the same bound applies backward. The second eigenvalue alone does not give this posterior bound. Unequal switching probabilities also change the capacity and trivial point, generally to $(\Delta_0,e(\Delta_0))$. These replacements and the limit and realization arguments remain to be checked; the stated theorem concerns the symmetric case.

For hidden Markov additive noise, including Gilbert--Elliott models, the error bit need not be Markov. A strictly positive hidden-state transition matrix suggests using projective contraction~\cite{Birkhoff1957}. One must obtain a bound uniform over all helper laws, including perfect helpers, and derive the mixed-entropy estimate and finite-window approximation. Ordinary state mixing alone does not provide the required posterior control.

General finite-state channels may lose additive symmetry or have input-dependent transitions. They require a suitable symmetry construction and a dependence bound compatible with the decoder; symmetric information rate must also be distinguished from capacity. Insertions and deletions further require alignment and rate-normalization arguments. Quantitative convergence at moderate widths and finite iteration counts remains open here, and other degree families need corresponding BMS positivity results.

\begingroup\small\setlength{\baselineskip}{13pt}
\bibliographystyle{IEEEtran}
\bibliography{references}

\begin{thebibliography}{10}
\providecommand{\url}[1]{#1}
\csname url@samestyle\endcsname
\providecommand{\newblock}{\relax}
\providecommand{\bibinfo}[2]{#2}
\providecommand{\BIBentrySTDinterwordspacing}{\spaceskip=0pt\relax}
\providecommand{\BIBentryALTinterwordstretchfactor}{4}
\providecommand{\BIBentryALTinterwordspacing}{\spaceskip=\fontdimen2\font plus
\BIBentryALTinterwordstretchfactor\fontdimen3\font minus
  \fontdimen4\font\relax}
\providecommand{\BIBforeignlanguage}[2]{{%
\expandafter\ifx\csname l@#1\endcsname\relax
\typeout{** WARNING: IEEEtran.bst: No hyphenation pattern has been}%
\typeout{** loaded for the language `#1'. Using the pattern for}%
\typeout{** the default language instead.}%
\else
\language=\csname l@#1\endcsname
\fi
#2}}
\providecommand{\BIBdecl}{\relax}
\BIBdecl

\bibitem{OkazakiKasai2014}
\BIBentryALTinterwordspacing
T.~Okazaki and K.~Kasai, ``Spatially-coupled {MacKay--Neal} codes with no bit
  nodes of degree two achieve the capacity of {BEC},'' in \emph{2014 IEEE
  International Symposium on Information Theory}, 2014, pp. 506--510. [Online].
  Available: \url{https://doi.org/10.1109/ISIT.2014.6874884}
\BIBentrySTDinterwordspacing

\bibitem{FukushimaOkazakiKasai2015}
\BIBentryALTinterwordspacing
M.~Fukushima, T.~Okazaki, and K.~Kasai, ``Spatially-coupled {MacKay--Neal}
  codes universally achieve the symmetric information rate of arbitrary
  generalized erasure channels with memory,'' in \emph{2015 IEEE International
  Symposium on Information Theory}, 2015, pp. 899--903, preprint:
  arXiv:1501.06736v1. [Online]. Available:
  \url{https://arxiv.org/abs/1501.06736}
\BIBentrySTDinterwordspacing

\bibitem{Gilbert1960}
\BIBentryALTinterwordspacing
E.~N. Gilbert, ``Capacity of a burst-noise channel,'' \emph{Bell System
  Technical Journal}, vol.~39, no.~5, pp. 1253--1265, Sep. 1960. [Online].
  Available:
  \url{https://vtda.org/pubs/BSTJ/vol39-1960/articles/bstj39-5-1253.pdf}
\BIBentrySTDinterwordspacing

\bibitem{Elliott1963}
\BIBentryALTinterwordspacing
E.~O. Elliott, ``Estimates of error rates for codes on burst-noise channels,''
  \emph{Bell System Technical Journal}, vol.~42, no.~5, pp. 1977--1997, Sep.
  1963. [Online]. Available:
  \url{https://doi.org/10.1002/j.1538-7305.1963.tb00955.x}
\BIBentrySTDinterwordspacing

\bibitem{MushkinBarDavid1989}
\BIBentryALTinterwordspacing
M.~Mushkin and I.~Bar-David, ``Capacity and coding for the {Gilbert--Elliott}
  channels,'' \emph{IEEE Transactions on Information Theory}, vol.~35, no.~6,
  pp. 1277--1290, Nov. 1989. [Online]. Available:
  \url{https://web.mit.edu/6.441/www/reading/IT-V35-N6.pdf}
\BIBentrySTDinterwordspacing

\bibitem{BCJR1974}
\BIBentryALTinterwordspacing
L.~R. Bahl, J.~Cocke, F.~Jelinek, and J.~Raviv, ``Optimal decoding of linear
  codes for minimizing symbol error rate,'' \emph{IEEE Transactions on
  Information Theory}, vol.~20, no.~2, pp. 284--287, 1974. [Online]. Available:
  \url{https://doi.org/10.1109/TIT.1974.1055186}
\BIBentrySTDinterwordspacing

\bibitem{PfisterSiegel2008}
\BIBentryALTinterwordspacing
H.~D. Pfister and P.~H. Siegel, ``Joint iterative decoding of {LDPC} codes for
  channels with memory and erasure noise,'' \emph{IEEE Journal on Selected
  Areas in Communications}, vol.~26, no.~2, pp. 320--337, 2008. [Online].
  Available: \url{https://doi.org/10.1109/JSAC.2008.080209}
\BIBentrySTDinterwordspacing

\bibitem{ArnoldLoeliger2001}
\BIBentryALTinterwordspacing
D.~Arnold and H.-A. Loeliger, ``On the information rate of binary-input
  channels with memory,'' in \emph{2001 IEEE International Conference on
  Communications}, vol.~9, Jun. 2001, pp. 2692--2695. [Online]. Available:
  \url{https://doi.org/10.1109/ICC.2001.936639}
\BIBentrySTDinterwordspacing

\bibitem{PfisterSoriagaSiegel2001}
\BIBentryALTinterwordspacing
H.~D. Pfister, J.~B. Soriaga, and P.~H. Siegel, ``On the achievable information
  rates of finite state {ISI} channels,'' in \emph{2001 IEEE Global
  Telecommunications Conference}, vol.~5, 2001, pp. 2992--2996. [Online].
  Available: \url{https://pfister.ee.duke.edu/papers/globecom2001.pdf}
\BIBentrySTDinterwordspacing

\bibitem{KavcicMaMitzenmacher2003}
\BIBentryALTinterwordspacing
A.~Kav{\v c}i{\'c}, X.~Ma, and M.~Mitzenmacher, ``Binary intersymbol
  interference channels: {Gallager} codes, density evolution, and code
  performance bounds,'' \emph{IEEE Transactions on Information Theory},
  vol.~49, no.~7, pp. 1636--1652, 2003. [Online]. Available:
  \url{https://www.eecs.harvard.edu/~michaelm/postscripts/toit2003a.pdf}
\BIBentrySTDinterwordspacing

\bibitem{EckfordKschischangPasupathy2005}
\BIBentryALTinterwordspacing
A.~W. Eckford, F.~R. Kschischang, and S.~Pasupathy, ``Analysis of low-density
  parity-check codes for the {Gilbert--Elliott} channel,'' \emph{IEEE
  Transactions on Information Theory}, vol.~51, no.~11, pp. 3872--3889, 2005.
  [Online]. Available: \url{https://doi.org/10.1109/TIT.2005.856934}
\BIBentrySTDinterwordspacing

\bibitem{FelstromZigangirov1999}
\BIBentryALTinterwordspacing
A.~Jim{\'e}nez~Felstr{\"o}m and K.~S. Zigangirov, ``Time-varying periodic
  convolutional codes with low-density parity-check matrix,'' \emph{IEEE
  Transactions on Information Theory}, vol.~45, no.~6, pp. 2181--2191, 1999.
  [Online]. Available: \url{https://doi.org/10.1109/18.782171}
\BIBentrySTDinterwordspacing

\bibitem{KudekarRichardsonUrbanke2011}
\BIBentryALTinterwordspacing
S.~Kudekar, T.~J. Richardson, and R.~L. Urbanke, ``Threshold saturation via
  spatial coupling: Why convolutional {LDPC} ensembles perform so well over the
  {BEC},'' \emph{IEEE Transactions on Information Theory}, vol.~57, no.~2, pp.
  803--834, 2011. [Online]. Available:
  \url{https://doi.org/10.1109/TIT.2010.2095072}
\BIBentrySTDinterwordspacing

\bibitem{KudekarRichardsonUrbanke2013}
\BIBentryALTinterwordspacing
S.~Kudekar, T.~Richardson, and R.~L. Urbanke, ``Spatially coupled ensembles
  universally achieve capacity under belief propagation,'' \emph{IEEE
  Transactions on Information Theory}, vol.~59, no.~12, pp. 7761--7813, 2013.
  [Online]. Available: \url{https://doi.org/10.1109/TIT.2013.2280915}
\BIBentrySTDinterwordspacing

\bibitem{Kumar2014}
\BIBentryALTinterwordspacing
S.~Kumar, A.~J. Young, N.~Macris, and H.~D. Pfister, ``Threshold saturation for
  spatially coupled {LDPC} and {LDGM} codes on {BMS} channels,'' \emph{IEEE
  Transactions on Information Theory}, vol.~60, no.~12, pp. 7389--7415, 2014.
  [Online]. Available: \url{https://doi.org/10.1109/TIT.2014.2360692}
\BIBentrySTDinterwordspacing

\bibitem{KudekarKasai2011}
\BIBentryALTinterwordspacing
S.~Kudekar and K.~Kasai, ``Threshold saturation on channels with memory via
  spatial coupling,'' in \emph{2011 IEEE International Symposium on Information
  Theory}, 2011, preprint: arXiv:1102.0406v1. [Online]. Available:
  \url{https://arxiv.org/abs/1102.0406}
\BIBentrySTDinterwordspacing

\bibitem{NguyenEtAl2012}
\BIBentryALTinterwordspacing
P.~S. Nguyen, A.~Yedla, H.~D. Pfister, and K.~R. Narayanan, ``Threshold
  saturation of spatially-coupled codes on intersymbol-interference channels,''
  in \emph{2012 IEEE International Conference on Communications}, 2012, pp.
  2181--2186. [Online]. Available:
  \url{https://doi.org/10.1109/ICC.2012.6364433}
\BIBentrySTDinterwordspacing

\bibitem{MacKayNeal1995}
\BIBentryALTinterwordspacing
D.~J.~C. MacKay and R.~M. Neal, ``Good codes based on very sparse matrices,''
  in \emph{Cryptography and Coding}, ser. Lecture Notes in Computer Science,
  vol. 1025.\hskip 1em plus 0.5em minus 0.4em\relax Springer, 1995, pp.
  100--111. [Online]. Available: \url{https://doi.org/10.1007/3-540-60693-9_13}
\BIBentrySTDinterwordspacing

\bibitem{MacKay1999}
\BIBentryALTinterwordspacing
D.~J.~C. MacKay, ``Good error-correcting codes based on very sparse matrices,''
  \emph{IEEE Transactions on Information Theory}, vol.~45, no.~2, pp. 399--431,
  1999. [Online]. Available: \url{https://doi.org/10.1109/18.748992}
\BIBentrySTDinterwordspacing

\bibitem{KasaiSakaniwa2011}
\BIBentryALTinterwordspacing
K.~Kasai and K.~Sakaniwa, ``Spatially-coupled {MacKay--Neal} codes and
  {Hsu--Anastasopoulos} codes,'' \emph{IEICE Transactions on Fundamentals of
  Electronics, Communications and Computer Sciences}, vol. E94-A, no.~11, pp.
  2161--2168, 2011. [Online]. Available:
  \url{https://doi.org/10.1587/transfun.E94.A.2161}
\BIBentrySTDinterwordspacing

\bibitem{Kasai2026}
\BIBentryALTinterwordspacing
K.~Kasai, ``Spatially coupled {MacKay--Neal} codes achieve capacity on {BMS}
  channels at fixed degrees,'' 2026, arXiv preprint, arXiv:2609.15526v1.
  [Online]. Available: \url{https://arxiv.org/abs/2609.15526v1}
\BIBentrySTDinterwordspacing

\bibitem{VerduHan1994}
\BIBentryALTinterwordspacing
S.~Verd{\'u} and T.~S. Han, ``A general formula for channel capacity,''
  \emph{IEEE Transactions on Information Theory}, vol.~40, no.~4, pp.
  1147--1157, Jul. 1994. [Online]. Available:
  \url{https://doi.org/10.1109/18.335960}
\BIBentrySTDinterwordspacing

\bibitem{Alajaji1995}
\BIBentryALTinterwordspacing
F.~Alajaji, ``Feedback does not increase the capacity of discrete channels with
  additive noise,'' \emph{IEEE Transactions on Information Theory}, vol.~41,
  no.~2, pp. 546--549, Mar. 1995. [Online]. Available:
  \url{https://doi.org/10.1109/18.370168}
\BIBentrySTDinterwordspacing

\bibitem{Watanabe1960}
\BIBentryALTinterwordspacing
S.~Watanabe, ``Information theoretical analysis of multivariate correlation,''
  \emph{IBM Journal of Research and Development}, vol.~4, no.~1, pp. 66--82,
  1960. [Online]. Available: \url{https://doi.org/10.1147/rd.41.0066}
\BIBentrySTDinterwordspacing

\bibitem{RichardsonUrbanke2001}
\BIBentryALTinterwordspacing
T.~J. Richardson and R.~L. Urbanke, ``The capacity of low-density parity-check
  codes under message-passing decoding,'' \emph{IEEE Transactions on
  Information Theory}, vol.~47, no.~2, pp. 599--618, 2001. [Online]. Available:
  \url{https://doi.org/10.1109/18.910577}
\BIBentrySTDinterwordspacing

\bibitem{Birkhoff1957}
\BIBentryALTinterwordspacing
G.~Birkhoff, ``Extensions of {Jentzsch}'s theorem,'' \emph{Transactions of the
  American Mathematical Society}, vol.~85, no.~1, pp. 219--227, 1957. [Online].
  Available: \url{https://doi.org/10.1090/S0002-9947-1957-0087058-6}
\BIBentrySTDinterwordspacing

\end{thebibliography}
\endgroup
\end{document}